\documentclass[
reprint,
nofootinbib,
 amsmath,amssymb,
 aps,
 pra,
 superscriptaddress,
floatfix,
]{revtex4-2}
\usepackage{listings}
\usepackage{amsthm}
\usepackage{balance}
\usepackage{todonotes}
\usepackage{tikz}
\usepackage{mathtools}

\usepackage{xcolor}
\usepackage{dcolumn}
\usepackage{bm}
\usepackage{bbm}
\usepackage{orcidlink}
\usepackage{physics}
\usepackage[version=4]{mhchem}
\usepackage{gensymb}
\usepackage{textcomp}
\makeatletter
\let\old@makecaption=\@makecaption
\usepackage{subcaption}
\let\@makecaption=\old@makecaption
\makeatother
\usepackage{silence}
\usepackage{graphicx,multirow}
\usepackage{array}
\usepackage{hyperref}
\usepackage[capitalize]{cleveref}
\usepackage[super]{nth}
\usepackage{lipsum}
\usepackage{booktabs}
\AtBeginDocument{\RenewCommandCopy\qty\SI}
\usepackage{dsfont}
\DeclareMathOperator{\Pf}{Pf}
\newcommand{\mcO}{\mathcal{O}}
\newcommand{\mbbC}{\mathbb{C}}

\newtheorem{prop}{Proposition}

\begin{document}
\title{\texorpdfstring{Quantum-Classical Auxiliary-Field Quantum Monte Carlo\\ at the Edge of Practicability}{Quantum-Classical Auxiliary-Field Quantum Monte Carlo at the Edge of Practicability}}
\date{\today}

\author{Francesco Nappi}
\email{francesco.nappi@iqm.tech}
\affiliation{IQM Quantum Computers, Georg-Brauchle-Ring 23-25, 80992, Munich, Germany}
\affiliation{Ludwig Maximilian University of Munich, Geschwister-Scholl-Platz 1, 80539 Munich, Germany}

\author{Matthew Kiser \orcidlink{0000-0002-9357-7583}}
\email{matthew.kiser@iqm.tech}
\affiliation{IQM Quantum Computers, Georg-Brauchle-Ring 23-25, 80992, Munich, Germany}
\affiliation{TUM School of Natural Sciences, Technical University of Munich, Garching, Germany}
\affiliation{Volkswagen AG, Wolfsburg, Germany}

\author{Fedor Šimkovic IV \orcidlink{0000-0003-0637-5244}}
\email{fedor.simkovic@iqm.tech}
\affiliation{IQM Quantum Computers, Georg-Brauchle-Ring 23-25, 80992, Munich, Germany}
\begin{abstract}
We introduce algorithmic improvements to quantum-classical auxiliary-field quantum Monte Carlo (QC-AFQMC) that reduce the dominant per-step classical scaling from $\tilde{\mathcal{O}}(N^{5.5})$ to $\tilde{\mathcal{O}}(N^{4.5})$ as a function of the number of molecular spin-orbitals $N$. Central to this improvement is the application of Aitken's block transformation to handle singular Pfaffians arising in the estimation of overlaps between a quantum trial state and classical Slater-determinant walkers. Together with the use of algorithmic differentiation for the computation of the force bias, this yields a $248\times$ estimated runtime improvement for a system of 100 molecular orbitals. Using our workflow, we demonstrate a ground-state energy calculation for \ce{H8} from quantum data collected on IQM Emerald and post-processed with a tensor-network-based error-mitigation technique. We further validate the method's scalability through noiseless simulation of hydrogen chains up to \ce{H12}, and on the lithium-air battery related rearrangement pathway of the \ce{Li2O4} lithium superoxide dimer in a (26e, 20o) active space. We estimate both quantum and classical runtimes for a potential fault-tolerant implementation of QC-AFQMC, showing that the method holds promise for the early fault-tolerant era. These results move QC-AFQMC a step closer to treating chemically relevant systems. 
\end{abstract}
\maketitle

\section{Introduction}

\begin{figure*}[htbp]
    \centering
    \includegraphics[width=0.92\linewidth]{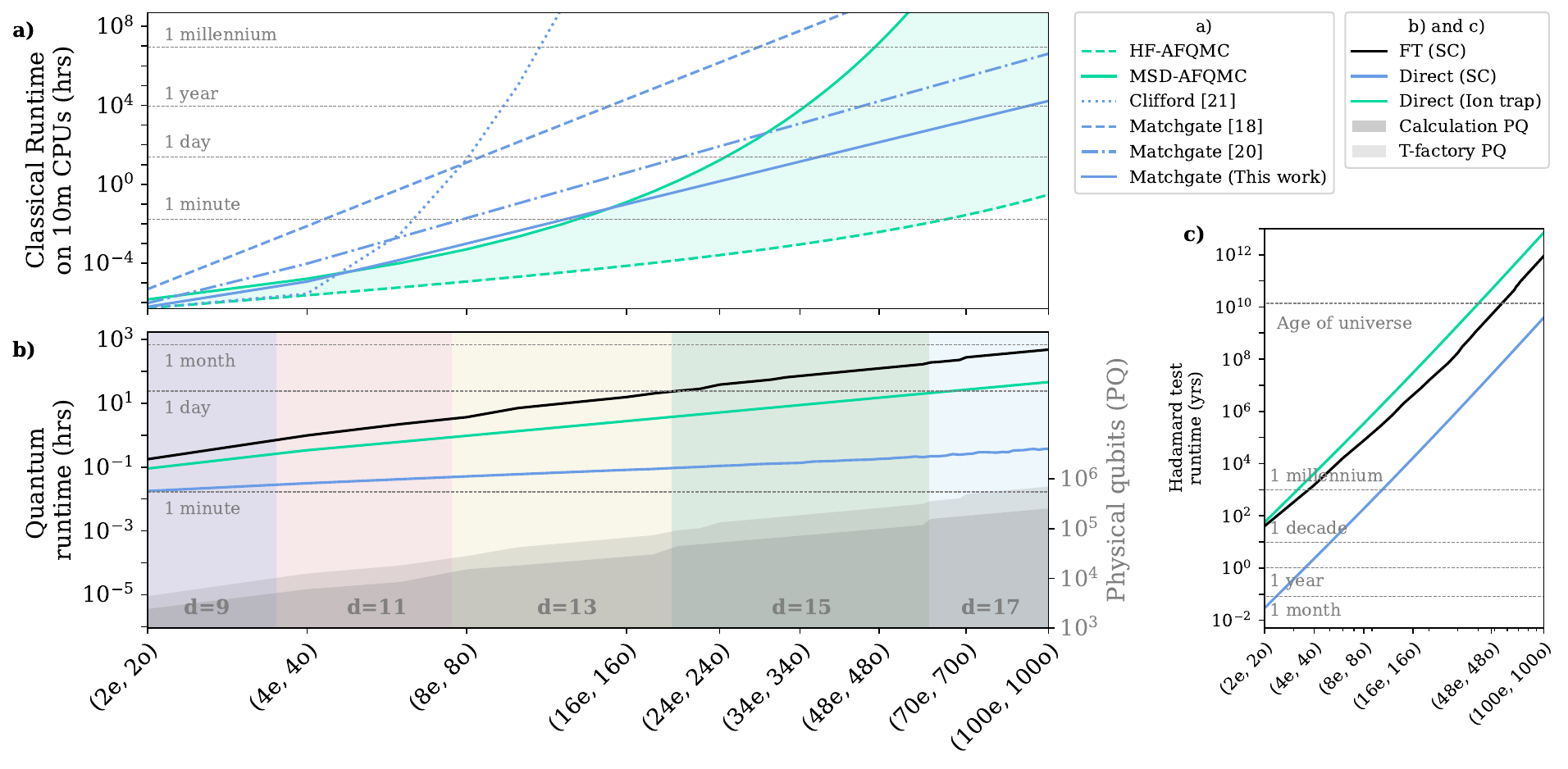}
    \caption{Estimated classical and quantum resource requirements of QC-AFQMC. Further details on assumptions and timings used to construct this figure can be found in \cref{subsec:scaling_assumptions,subsec:quantum_runtime}. \emph{a)} Estimated total classical runtime of classical ph-AFQMC (green) and hybrid QC-AFQMC (blue) implementations as a function of the chemical system size described by $(N_{\text{elec}},\, N_{\text{orb}})$.  \emph{b)} Quantum runtime and physical qubit (PQ) count estimates for collecting matchgate shadow data using a FT quantum computer based on surface code error correction, as well as from direct NISQ implementations using superconducting (SC) and ion-trap quantum hardware. \emph{c)} QC-AFQMC runtimes for NISQ and FT implementations using the Hadamard test to compute overlaps directly on a quantum device.}
    \label{fig:complexity_scaling}
     \stepcounter{subfigure}
    \addtocounter{subfigure}{-1}  
    \refstepcounter{subfigure}\label{fig:complexity_scaling_a}
    \refstepcounter{subfigure}\label{fig:complexity_scaling_b}
    \refstepcounter{subfigure}\label{fig:complexity_scaling_c}
\end{figure*}

Computational chemistry has an impact on numerous industries, including automotive~\cite{narula1996materials}, aviation~\cite{arnold2022materials}, pharmaceuticals~\cite{campos2019importance}, agriculture~\cite{lewer2022structure}, and catalyst design~\cite{hammesschiffer2017catalysts}. Yet, the most important open problems are intractable for transistor-based (classical) computers due to the exponentially growing Hilbert space of fermionic systems. Such problems can, in contrast, be efficiently mapped to quantum computers, and a large part of the search for practical applications of quantum computers has therefore naturally revolved around computational chemistry~\cite{bennett1997strengths,alexeev2024quantumcentric, bärtschi2025potentialapplicationsquantumcomputing, gundlach2025quantumadvantagecomputationalchemistry,huang2025vastworldquantumadvantage}. 

Despite the fact that computing ground states of chemical systems is QMA-hard~\cite{ogorman2022intractability}, it is widely believed that future fault-tolerant (FT) quantum computers using purely quantum algorithms such as quantum phase estimation~\cite{kitaev1995quantum, ni2023low} will deliver a practical advantage over their classical counterparts. This confidence does not extend to near-term (noisy) quantum devices, mainly because the ``first generation" of variational quantum algorithms was shown to suffer from barren plateaus~\cite{mcclean2018barren} as well as prohibitive training and measurement costs~\cite{liu2023variational}. Modern hybrid (quantum-classical) methods promise an alternative path towards practical quantum advantage before the age of fault-tolerance as they distribute the workload between classical and quantum resources according to their relative strengths~\cite{alexeev2025perspective}. They typically use the quantum device only for sampling quantum states while the classical device performs the algorithmic heavy-lifting based on the resulting quantum data. 

A hybrid algorithm that has attracted considerable attention from both academia and industry~\cite{amsler2023classical,huang2024evaluating,kiser2025contextual,zhao2025quantum} is quantum-classical auxiliary-field quantum Monte Carlo (QC-AFQMC)~\cite{huggins2022unbiasing}, which belongs to the broader family of hybrid QMC methods that combine quantum state preparation with classical Monte Carlo diffusion~\cite{jiang2025walking,buonaiuto2026unified}. At its core, the approach builds on classical phaseless AFQMC (ph-AFQMC), a ground-state projector quantum Monte Carlo method that represents electron–electron interactions with auxiliary fields and stochastically propagates the system in imaginary time via an ensemble of Slater determinant (SD) walkers. The ``phaseless” label refers to an approximate constraint on the walker evolution imposed through a trial wavefunction, and aimed at mitigating an exponential growth of statistical noise that would otherwise arise from uncontrolled complex phases in the walker weights~\cite{zhang2013auxiliary}. Over the past two decades, ph-AFQMC has established itself as a workhorse algorithm in condensed matter physics~\cite{xu2024coexistence} as well as quantum chemistry~\cite{lee2022twenty}, delivering excellent accuracy at low polynomial-order computational cost.

The main limitation of ph-AFQMC comes from the critical dependence on the trial wavefunction, which introduces a systematic, non-variational bias that vanishes only in the limit of using the true ground state as the trial state. Trial wavefunctions are typically chosen to be either single- (i.e. Hartree-Fock) or multi-SD (MSD) representations obtained from other numerical methods, such as density matrix renormalization group (DMRG)~\cite{jiang2025unbiasing}, coupled-cluster singles and doubles (CCSD)~\cite{kjonstad2025systematic} or selected configuration interaction (SCI)~\cite{mahajan2022selected}. For strongly correlated systems, however, the number of required SDs can scale exponentially with the system size, limiting the practical applicability of MSD-AFQMC for many systems of interest~\cite{zweig2022towards}. 

QC-AFQMC addresses this limitation by using wavefunctions prepared on a quantum device as trial states, with the expectation that quantum computers can more efficiently prepare trial states of higher fidelity than classical methods. 
The central computational bottleneck of QC-AFQMC is the calculation of overlaps between the quantum trial wavefunction and classical SD Monte Carlo (MC) walkers. Although these can, in principle, be computed directly on quantum devices via a modified Hadamard test~\cite{luongo2024quantum}, the sheer number of required overlaps during a typical ph-AFQMC run renders this option computationally infeasible at the current availability and clock speed of quantum devices~\cite{kiser2024classical}. 

For this reason, it was originally proposed~\cite{huggins2022unbiasing} to instead evaluate overlaps using classical shadows~\cite{huang2020predicting}, which has the advantage of collecting quantum data only once and then offloading all overlap calculations to a classical post-processing step. Unfortunately, the initial use of Clifford shadows scaled exponentially with the system size $N$~\cite{huggins2022unbiasing}. Using matchgate shadows instead allowed a polynomial classical runtime scaling as $\tilde{\mathcal{O}}(N^{8.5})$ for each time step of QC-AFQMC with a variance scaling as $\mathcal{O}(\sqrt{N}\log(N))$ for each overlap evaluation~\cite{wan2023matchgate}. 
The use of derivatives-based expressions within the evaluation of the force bias and local energy subsequently reduced the exponent down to $\tilde{\mathcal{O}}(N^{5.5})$~\cite{jiang2025unbiasing}. Yet, it remains prohibitively high for systems of scientific interest, as can be seen from \cref{fig:complexity_scaling_a} (dot-dashed blue line) and is significantly worse than the $\mathcal{O}(N^{4})$ scaling of the purely-classical ph-AFQMC algorithm using single-SD trial states~\cite{zhang2013auxiliary}. 

In this work, we further reduce the scaling of the classical post-processing to $\tilde{\mathcal{O}}(N^{4.5})$ per time step through algorithmic improvements in the matchgate-shadow overlap calculation, thereby significantly decreasing the classical runtime of QC-AFQMC. Using our accelerated implementation, we study the numerical stability and convergence properties of QC-AFQMC beyond the current state-of-the-art. Specifically, we compute the ground-state energies of hydrogen chains: $\ce{H8}$ using experimental data from IQM Emerald \cite{abdurakhimov2024technologyperformancebenchmarksiqms}, and \ce{H12} using simulated data. Further, we study three configurations of the rearrangement pathway of $\ce{Li2O4}$ lithium superoxide dimer (see \cref{fig:li_superoxide_pathway}) using 40-qubit simulated data within a (26e, 20o) active space. This pathway is directly relevant to the charge and discharge cycles of lithium–air batteries, which have attracted considerable scientific interest owing to their exceptionally high theoretical energy density, comparable to that of hydrocarbon fuels and substantially greater than conventional lithium-ion batteries~\cite{bryantsev2010stability,badwal2014emerging,das2014structure,kale2024comprehensive,gao2021computational}. 

We also investigate the quantum resource requirements of QC-AFQMC (shown in \cref{fig:complexity_scaling_b,fig:complexity_scaling_c}) and find that, unlike using the Hadamard test~\cite{zhang2025quantum}, the matchgate shadow protocol yields reasonable quantum runtimes for both the near-term intermediate-scale quantum (NISQ)~\cite{preskill2018quantum} and FT quantum computing regimes. Taken together, the results presented in \cref{fig:complexity_scaling} suggest that QC-AFQMC is approaching the boundary of practical feasibility for investigations of chemically relevant systems.

The paper is structured as follows. In \cref{sec:results}, we outline the main algorithmic improvements with respect to the classical (\cref{subsec:classical_runtime}) and quantum (\cref{subsec:quantum_runtime}) scaling of QC-AFQMC and present the results of our calculations performed using experimental and simulated data (\cref{subsec:qcafqmc_implementation}). In \cref{sec:discussion}, we discuss the implications of our results in the broader context of QC-AFQMC and suggest future research directions. Detailed descriptions of ph-AFQMC, QC-AFQMC, the error mitigation methods used for our quantum data, the assumptions for our classical and quantum runtime estimates, and our improvements of the matchgate shadow post-processing steps within QC-AFQMC are found in \cref{sec:methods}.

\begin{figure}
    \centering
    \includegraphics[width=0.92\linewidth]{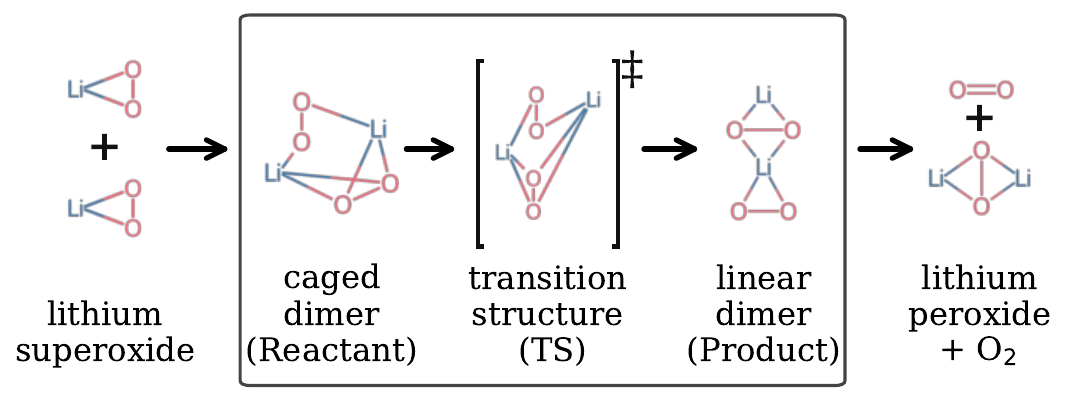}
    \caption{Reaction pathway of lithium superoxide to lithium peroxide and molecular oxygen \cite{bryantsev2010stability}.
    }
\label{fig:li_superoxide_pathway}
\end{figure}

\section{Results}\label{sec:results}

\subsection{Classical runtime}\label{subsec:classical_runtime}

This section introduces the ph-AFQMC algorithm, derives the key quantities required for its quantum-classical variant (QC-AFQMC), outlines the computational bottleneck in matchgate shadow post-processing, and presents a method that reduces its scaling complexity by one polynomial degree. 

The ph-AFQMC algorithm uses a reference trial state $|\psi_{\rm T}\rangle$ to guide the stochastic, Trotterized, imaginary-time evolution 
\begin{equation}
\left|\Psi_{\text{GS}}\right\rangle \propto \lim _{n\rightarrow \infty}\left(\exp 
 (-\Delta \tau \hat{H})\right)^n|\Phi_0\rangle\,,
\end{equation}
of an initial state $|\Phi_0\rangle$ expressed as an ensemble of SD walkers $\ket{\varphi_i}$ to estimate the ground state $\ket{\Psi_{\text{GS}}}$ of a second-quantized Hamiltonian of the form:
\begin{equation}
\begin{aligned}\label{eq:qcham}
\hat{H}&= \hat{H}_1 + \hat{H}_2 \\
&=\sum_{p,q=1}^N h_{p q} \hat{a}_p^{\dagger} \hat{a}_q^{\phantom{\dagger}}+\frac{1}{2} \sum_{\gamma=1}^{N_{\rm C}} \sum_{p,q,r,s=1}^N  \!\! L_{ps}^\gamma L_{qr}^{\gamma}\hat{a}_p^{\dagger} \hat{a}_q^{\dagger} \hat{a}_r^{\phantom{\dagger}} \hat{a}_s^{\phantom{\dagger}}\,,
\end{aligned}
\end{equation}
where $h_{pq}$ are one-electron integrals, the two-electron interaction is represented in a compact, Cholesky-decomposed form and $L^\gamma_{ps}$ are the $N_{\rm C}$-many Cholesky matrices of the electron-repulsion integral tensor.
To preserve the SD nature of the walkers, all non-quadratic terms in the imaginary-time evolution are decomposed into integrals over auxiliary fields via a Hubbard-Stratonovich transformation~\cite{hubbard1959calculation,*stratonovich1957method,negele2018quantum}. These integrals are then evaluated through a stochastic Monte Carlo sampling process that uses $|\psi_{\rm T}\rangle$ for importance sampling and the mitigation of the underlying phase problem~\cite{zhang2003quantum}. 

In QC-AFQMC~\cite{huggins2022unbiasing}, the trial state $|\psi_{\rm T}\rangle$ is prepared on a quantum computer; thus, the central intersection between the quantum and classical parts of QC-AFQMC is the calculation of the overlap between this trial state and the SD MC walkers. This is done using matchgate shadow tomography~\cite{wan2023matchgate}. In this approach, one collects many measurements (``snapshots") of the state in different effective bases by sampling from the distribution of matchgate circuits whose associated inverse channels can be computed efficiently on a classical computer. This post-processing step produces unbiased estimates of properties of the state, in this case state overlaps. More detailed treatments of QC-AFQMC and the overlap reconstruction from matchgate shadows are found in \cref{subsec:qcafqmc_description} and \cref{subsec:calc_ovlp_mg_shadows}, respectively.

In QC-AFQMC, there are three quantities that must be computed from overlap estimates between the quantum trial state and SD walkers using matchgate shadow data: the Monte Carlo weight $w_i(\tau)$ and force bias for each walker 
$\ket{\varphi_i(\tau)}$ 
at each time step $\tau$, 
and the local energy of each walker at periodic points throughout the evolution.
For the $i$-th walker, its weight $w_i(\tau)$ (\cref{eq:phaseless}) depends on the ratio of overlaps between the trial state $\ket{\psi_{\rm T}}$ and the walker state at both the start and end of the corresponding time step (\cref{eq:ovlp_ratio}). The $\gamma$-th term of the force bias $\bar{\mathbf{x}}\in \mathbb{C}^{N_{\rm C}}$
can be expressed as~\cite{jiang2025unbiasing}
\begin{equation}
    \bar{x}_\gamma=-\sqrt{\Delta\tau}\left.\frac{\partial\ln(\bra{\psi_{\rm T}}e^{\lambda_\gamma\hat{v}_\gamma}\ket{\varphi_i(\tau)})}{\partial\lambda_\gamma}\right|_{\lambda_\gamma=0}\,,\label{eq:force_bias}
\end{equation}
where  $\hat{v}_{\gamma}=i\sum_{pq}L^{\gamma}_{pq}\hat{a}^{\dagger}_{p}\hat{a}_{q}$ is the one-body Cholesky operator. 
By Thouless' theorem~\cite{thouless1960stability,thouless1961vibrational}, the action of $e^{\lambda_\gamma\hat{v}_\gamma}$ on the SD $\ket{\varphi_i(\tau)}$ yields another SD, and \cref{eq:force_bias} can be written as a ratio of overlaps (see \cref{subsec:fb_improvement}). 

The local energy $E_{\rm L}$ with respect to a walker at a given QC-AFQMC step can similarly be obtained from the sum of its one- and two-body parts~\cite{jiang2025unbiasing}
\begin{align}
    E_1 &= \left.
    \frac{\partial \ln \bigl(
    \bra{\psi_{\rm T}} e^{\lambda \hat{H}_1} \ket{\varphi_i}
    \bigr)}
    {\partial \lambda}
    \right|_{\lambda=0}
    \label{eq:h1_energy_diff}\\
    E_2 &= \frac{1}{2 \braket{\psi_{\rm T}}{\varphi_i}}
    \sum_\gamma
    \left.
    \frac{\partial^2
    \bra{\psi_{\rm T}}
    e^{-i\lambda_\gamma \hat{v}_\gamma}
    \ket{\varphi_i}}
    {\partial \lambda_\gamma^2}
    \right|_{\lambda_\gamma=0}\,.\nonumber
\end{align}
Both contributions are expressed entirely as derivatives of overlaps, meaning that the local energy can also be obtained solely from estimating overlaps via matchgate shadows.

Let us consider the classical scaling of evaluating the aforementioned quantities. The computation of MC walker weights corresponds to ratios of single overlap estimates, which scale as $\mathcal{O}(N^4)$ per snapshot in the matchgate shadows protocol of Ref.~\onlinecite{wan2023matchgate}. It has been shown that the evaluation of the force bias per snapshot scales similarly to a single overlap calculation $\mathcal{O}(N^4)$, with the post-processing of the local energy scaling as $\mathcal{O}(N^{5})$~\cite{jiang2025unbiasing}. 

The computational bottleneck in estimating an overlap with matchgate shadows lies in evaluating the polynomial
\begin{equation}
    q(z)\sim \Pf(A(z)) \equiv \Pf(B+zC)\,,
\end{equation}
where the matrices $B$ and $C$ inside the Pfaffian are square, skew-symmetric and of dimension $2N-\zeta$, and $\zeta$ is the number of fermions in the system (see \cref{subsec:calc_ovlp_mg_shadows} for more details). When $B$ is invertible, the coefficients of $q(z)$ can be computed in $\mathcal{O}(N^3)$ via the differentiation method of Ref.~\onlinecite{wan2023matchgate}, summarized in \cref{app:appDalg}. In the setting of QC-AFQMC, $C$ typically has full rank, while $B$ comes with the following structure
\begin{align}\label{eq:def_B}
    B\equiv&\bigoplus_{j=1}^{\zeta/2}
    \begin{bmatrix}
        0&0\\
        0&0
    \end{bmatrix}
    \oplus
    \bigoplus_{j=1}^{N-\zeta}
    \begin{bmatrix}
        0&1\\
        -1&0
    \end{bmatrix}
    \equiv
    \begin{bmatrix}
        \mathbf{0}&\mathbf{0}\\
        \mathbf{0}&J
    \end{bmatrix}\,,
\end{align}
meaning that $B$ is never invertible by construction. Therefore, the coefficients of $q(z)$ must be computed through interpolation, which scales as $\mathcal{O}(N^4)$ due to the $\mathcal{O}(N)$-many required calculations of Pfaffians which take $\mathcal{O}(N^3)$ time~\cite{wimmer2012algorithm}. Consequently, the existing per-snapshot overlap evaluation using matchgate shadows scales as $\mathcal{O}(N^4)$~\cite{wan2023matchgate}. In the following, we show how to reduce this cost by one polynomial degree to $\mathcal{O}(N^3)$.

The central realization is that the skew-symmetric matrix $A(z)$ can be partitioned as
\begin{align}\label{eq:split_A}
    A(z)=
    \begin{bmatrix}
        zC_{11}&zC_{12}\\
        -zC_{12}^{\rm T}&J + zC_{22}
    \end{bmatrix}\,,
\end{align}
where $C$ is split into four blocks with $C_{21} = - C_{12}^{\rm T}$, and such that $C_{22}$ is of the same dimensions as $J$ in \cref{eq:def_B}. Applying Aitken's block transformation formula for Pfaffians~\cite{bunch1982note,zhang2006schur,gonzalez2011numeric} yields
\begin{equation}
\begin{aligned}
    \Pf(A)=&\Pf(zC_{11})\Pf(J+zC_{22}+zC_{12}^{\rm T}C_{11}^{-1}C_{12})\\
    =&z^{\zeta/2}\Pf(C_{11})\Pf(J+zF)\,,
\end{aligned}
\end{equation}
where $F=C_{22}+C_{12}^{\rm T}C_{11}^{-1}C_{12}$. Since $J$ is non-singular, the derivatives of $\Pf(J+zF)$ can be computed using the differentiation method of Ref.~\cite{wan2023matchgate} as a subroutine, and the higher-order derivatives of $\Pf(J+zF)$ follow by the Leibniz rule. As shown in \cref{app:symmetry}, the coefficients of $q(z)$ satisfy a symmetry that reduces the computational cost by an additional factor of two. 
The asymptotic scaling is dominated by an eigenvalue problem and a single Pfaffian calculation, both of which scale as $\mathcal{O}(N^3)$. Compared with the interpolation approach, our method improves the numerical accuracy of estimating $q(z)$, most notably near $z\sim0$; this is discussed further in \cref{app:num_stability}. 

A second $\mathcal{O}(N^3)$ method for the computation of overlaps is presented in \cref{app:skew_method}, where the covariance matrix of the vacuum state is perturbed by $\pm\beta$ along an auxiliary variable and the true coefficients are recovered through interpolation. In practice, this approach proved less numerically stable than the Aitken-based method.

Crucially, this $\mathcal{O}(N^3)$ overlap post-processing directly improves the scaling of both the force bias and the local energy. In \cref{subsec:fb_improvement}, we demonstrate how our improved algorithm yields a more efficient force-bias computation than prior state-of-the-art, maintaining the same asymptotic computational cost of a single overlap estimation~\cite{jiang2025unbiasing}. The local energy now scales as $\mathcal{O}(N^3N_{\rm C})$ per snapshot and, since $N_{\rm C}$ grows linearly with system size~\cite{motta2018ab}, the matchgate post-processing for local energy calculations scales overall as $\mathcal{O}(N^4)$.

Altogether, our improvements reduce the asymptotic scaling of QC-AFQMC from $\tilde{\mathcal{O}}(N^{7.5})$~\cite{jiang2025unbiasing} to $\tilde{\mathcal{O}}(N^{6.5})$, given the $\mathcal{O}(N^{4})$ per-snapshot, local-energy cost, the requirement of $\mathcal{O}(\sqrt{N}\log(N))$ snapshots for an overlap estimate~\cite{wan2023matchgate}, and the $\mathcal{O}(N^{2})$ local energy evaluations required for the final QC-AFQMC energy estimate~\cite{lee2022twenty}. \cref{fig:complexity_scaling_a} compares estimated runtimes of our QC-AFQMC implementation (solid blue) against prior work~\cite{huggins2022unbiasing,wan2023matchgate,jiang2025unbiasing,zhao2025quantum} (dotted/dashed blue) for various system sizes, assuming ten million available CPU cores (more details in \cref{subsec:scaling_assumptions}). Our method outperforms all prior implementations. Notably, the original algorithm of Ref.~\onlinecite{huggins2022unbiasing} using a Clifford shadow protocol scales prohibitively beyond around 20 qubits, well within the realm of full configuration-interaction (FCI) calculations. Similarly, the best prior matchgate implementation~\cite{zhao2025quantum} would require multiple years for a 50-orbital calculation, a system size considered beyond classical brute-force methods, and close to half a millennium for 100 orbitals. In contrast, our implementation requires approximately one week and 1.8 years for 50 and 100 orbitals, respectively, corresponding to relative improvements of up to $248\times$.  
We note that all runtime estimates for our improved matchgate shadows protocol have been obtained by extrapolating from our largest (40-qubit) computation performed on the LUMI supercomputer~\cite{lumi2026supercomputer}.

\cref{fig:complexity_scaling_a} also shows the classical scaling of ph-AFQMC (green lines and shaded area). A single-determinant Hartree-Fock (HF) trial (dashed green) is extremely fast but may not be sufficiently accurate to capture the properties of the ground-state wavefunction. If the fidelity of the classical trial state is required to remain constant as the system size grows, the number of SDs in its description can scale exponentially for strongly correlated systems. Adopting the conservative estimate of $2^{0.37N}$ SDs~\cite{kanno2023quantumselectedconfigurationinteractionclassical}, the computational runtime of multi-Slater-determinant MSD-AFQMC (full green line) grows very rapidly and intersects that of QC-AFQMC at approximately 15 orbitals. Beyond this crossover, it may become advantageous to encode the MSD trial state on a quantum rather than classical device, which will likely require the use of FT quantum computers. We assess the quantum runtime of such QC-AFQMC implementations in the next section.

\subsection{Quantum runtime}\label{subsec:quantum_runtime}

The quantum runtime of QC-AFQMC in the NISQ and FT settings has received little attention beyond a brief discussion in Ref.~\cite{kiser2024classical}. We fill this gap by estimating the quantum runtime in both settings, as shown in \cref{fig:complexity_scaling_b}. We additionally analyze the runtime of the modified Hadamard test~\cite{luongo2024quantum} approach to estimating overlaps directly on a quantum device in \cref{fig:complexity_scaling_c}. Further details on the underlying assumptions of our estimates are found in \cref{subsec:scaling_assumptions_quantum}. 

We begin by estimating the circuit execution time of QC-AFQMC. 
For this, we consider a randomly-initiated local unitary cluster Jastrow (LUCJ) ansatz state~\cite{matsuzawa2020jastrow} with a single ansatz block repetition and square grid connectivity. Such circuits are considered likely classically difficult
to sample from, given that the related UCJ circuits generalize IQP circuits \cite{hafid2025hardness}. Following the shadow protocol, a matchgate circuit sampled from the orthogonal group $O(2N)$ is appended to the ansatz preparation circuit. Using gate execution timings of superconducting (SC, blue) and ion-trap (IT, green) quantum processors, we found execution times to remain reasonable for large system sizes, i.e.  roughly 21 minutes (SC) and 1.8 days (IT) for 100 orbitals calculations. 

Next, we estimate the FT execution time for a superconducting device with square connectivity, using surface codes with their code distances (indicated by the shaded regions in \cref{fig:complexity_scaling_b}) adapted to maintain a constant state fidelity. Under conservative physical fidelity estimates, a surface code of at least distance $d=9$ is required even for small systems, the distance growing to $d=15$ at 50 orbitals and $d=17$ at 100 orbitals. This corresponds to a requirement of $104$ and $255$ thousand physical qubits for the trial state with an additional $169$ and $450$ thousand physical qubits required for T-gate factories, respectively. 
In terms of execution time, we found that approximately 19 days are required for the data collection of 100-orbital systems in the FT setting, which is significant but still far below the corresponding classical runtime of QC-AFQMC.

Overlaps can also be calculated using the modified Hadamard test~\cite{luongo2024quantum} rather than matchgate shadow tomography. Given the circuit depths and controlled-unitary operations, this approach was previously deferred to the FT regime and supplanted by shadow tomography~\cite{huggins2022unbiasing,kiser2024classical}. In \cref{fig:complexity_scaling_c}, we place more concrete numbers behind this assessment. The dominant overhead is the repeated circuit execution required to estimate measurement averages for all MC walkers and at all time-steps. We find that a system of just 4 orbitals would require a millennium of compute time; at 100 orbitals, the estimate exceeds the age of the universe. This conclusion is not materially improved by removing the FT infrastructure overhead from the circuit model. We thus conclude that the modified Hadamard test is not a viable path for QC-AFQMC at any practically relevant system size with current or near-term methods unless significant algorithmic improvements are achieved.

\subsection{Implementation and scaling}\label{subsec:qcafqmc_implementation} 

We now proceed to investigate the performance of our improved QC-AFQMC workflow on various classes of quantum chemistry systems. 

First, we study stretched hydrogen chains up to \ce{H12} at $R=1.5\text{\AA}$. We use the separable pair approximation (SPA) ansatz~\cite{kottmann2022optimized} for trial state preparation. SPA states are easy to simulate and train classically, and the corresponding molecular orbital basis set is determined as part of the SPA optimization procedure~\cite{kottmann2022optimized}. The SPA ansatz is physically-inspired and hardware-efficient, meaning that it generates quantum states with high overlap with the ground state while being implementable through short quantum circuits. 

In \cref{fig:evolution_plot_h8_H12}, we present QC-AFQMC evolution curves using error-mitigated measurements on IQM's Emerald quantum computer~\cite{abdurakhimov2024technologyperformancebenchmarksiqms} for \ce{H8} (\cref{fig:evolution_plot_h8_H12_a}) as well as noiseless simulated measurements for \ce{H12} (\cref{fig:evolution_plot_h8_H12_b}). We show curves of three different AFQMC calculations for comparison. For two of the three, the walker propagation used matchgate shadow data to estimate their weights and the force bias, but they differ in how the local energy was computed: one via matchgate shadow data (MG-QC-AFQMC, blue) and the other via the state vector representation of the trial state (Exact-QC-AFQMC, green). The third curve is a fully classical ph-AFQMC calculation (yellow). This allows us to compare the stability of QC-AFQMC under the statistical noise of the matchgate protocol, and the performance of calculating the local energies $E_{\rm L}$ and the total energy at the end of the evolution. More details of the calculations can be found in \cref{subsec:implementation_details}.

\begin{figure}
    \centering
    \includegraphics[width=0.92\linewidth]{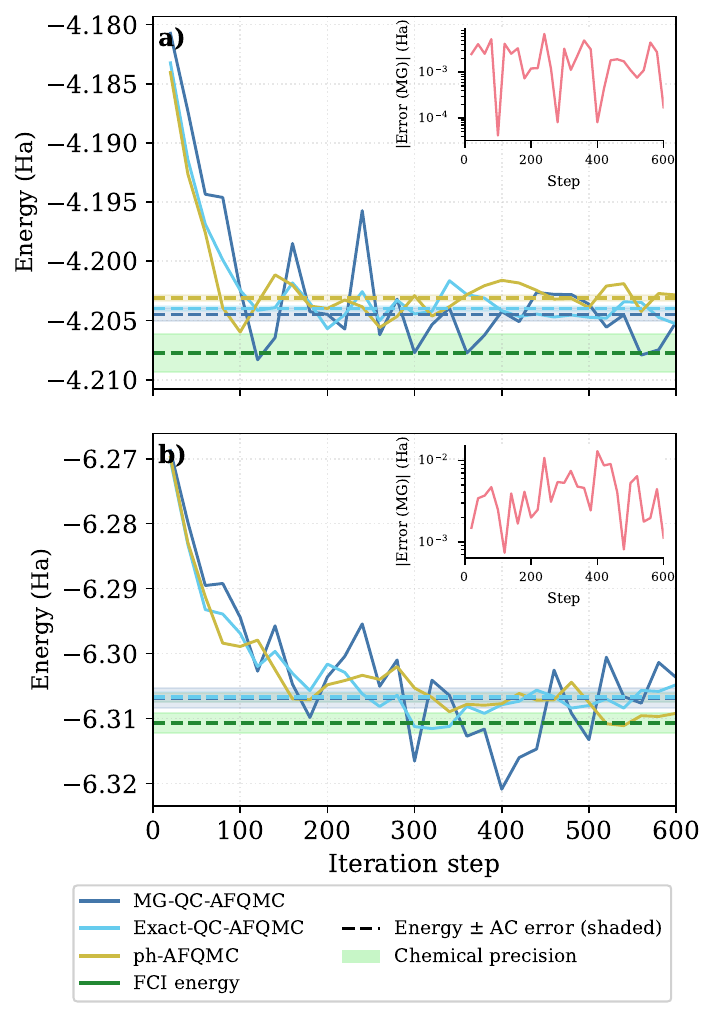}
    \caption{QC-AFQMC evolution for a) \ce{H8} and b) \ce{H12} hydrogen chains with $R=1.5\text{\AA}$ using experimental and simulated data, respectively. We show convergence curves where energies are computed using either matchgate shadow data (MG-QC-AFQMC) or via a state-vector representation of the trial state (Exact-QC-AFQMC). Fully classical ph-AFQMC curves and exact FCI energies are shown for comparison.    Insets show the error of the energy estimation of MG-QC-AFQMC with respect to the exact energy calculation of Exact-QC-AFQMC.}
    \label{fig:evolution_plot_h8_H12}
    \stepcounter{subfigure}
    \addtocounter{subfigure}{-1}  
    \refstepcounter{subfigure}\label{fig:evolution_plot_h8_H12_a}
    \refstepcounter{subfigure}\label{fig:evolution_plot_h8_H12_b}
\end{figure}

\begin{figure}
    \centering
\includegraphics[width=0.92\linewidth]{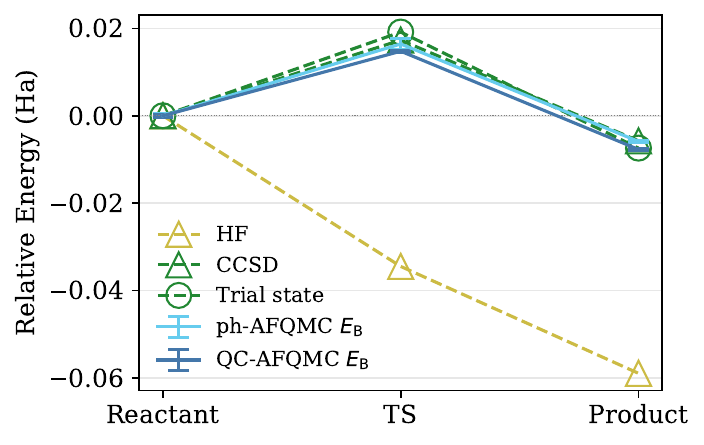}
    \caption{Reaction energies for three configurations of the rearrangement pathway of \ce{Li2O4} lithium superoxide dimer (shown in \cref{fig:li_superoxide_pathway}), where lithium peroxide and oxygen (product) are formed from a lithium superoxide dimer (reactant) through a transition state (TS).
    Calculations are performed in the cc-pVDZ basis set using an active space of 26 electrons in 20 orbitals. We compare the results for Hartree-Fock, CCSD, the trial state (top 3 determinants of CCSD) and block energies $E_{\rm B}$ during QC-AFQMC and ph-AFQMC calculations. Absolute energies are given in \cref{tab:energies}.}
    \label{fig:qc_afqmc_performance_40qb}
\end{figure}

The energies calculated in the three methods align in both chemical systems, albeit with the MG-QC-AFQMC calculation exhibiting a somewhat higher variance (See \cref{tab:hchain_energies} in \cref{app:abs_energies}). For both \ce{H8} and \ce{H12}, the energies of all three AFQMC variants are slightly higher than the FCI energies, which can be attributed to the bias coming from the trial state, as evidenced by the results of the ph-AFQMC calculation.

Next, we turn our attention to the three configurations along the \ce{Li2O4} rearrangement pathway shown in \cref{fig:li_superoxide_pathway}: the caged dimer (reactant), transition structure (TS) and linear dimer (product) using the geometries from Ref.~\cite{gao2021computational}. We solve all configurations within an active space obtained from a restricted Hartree–Fock (RHF) calculation in the cc-pVDZ basis set. Here, the six lowest orbitals (two \ce{Li}~$1s$ and four \ce{O}~$1s$) were frozen, leaving an active space of 26 electrons (13 spin-up and 13 spin-down) in 20 orbitals, corresponding to 40 qubits. For each configuration, we prepared trial states by mapping the CCSD wavefunction onto a configuration-interaction expansion and retaining the three highest-weight determinants. This enables the efficient simulation of the 40-qubit circuits~\cite{dias2024classical} while also recovering the correct qualitative behavior for the reaction path within ph-AFQMC. More details on the trial state and sampling method are provided in \cref{app:large_sampling}.

\cref{fig:qc_afqmc_performance_40qb} shows reaction energies for the pathway with the absolute energies given in \cref{tab:energies}. Instead of calculating a full QC-AFQMC energy, we perform a test by using matchgate shadows for the evolution and compute the block energy $E_{\rm B}$ at the 100th time step, which we compare to the same test using ph-AFQMC (see \cref{subsec:implementation_details}). The reaction energies estimated from the block energies of QC-AFQMC and ph-AFQMC are in the neighborhood of the CCSD calculations and recover the expected triangular shape of the reaction pathway based on prior literature~\cite{bryantsev2010stability,das2014structure,gao2021computational}. While these results were obtained from a single block energy evaluation, it implies that for a full QC-AFQMC calculation with more walkers and $\mathcal{O}(N^2)$ energy evaluations one could reasonably expect to recover the nature of the reaction pathway, thus demonstrating the algorithmic viability at previously inaccessible scales.

\begin{figure}
    \centering
    \includegraphics[width=0.92\linewidth]{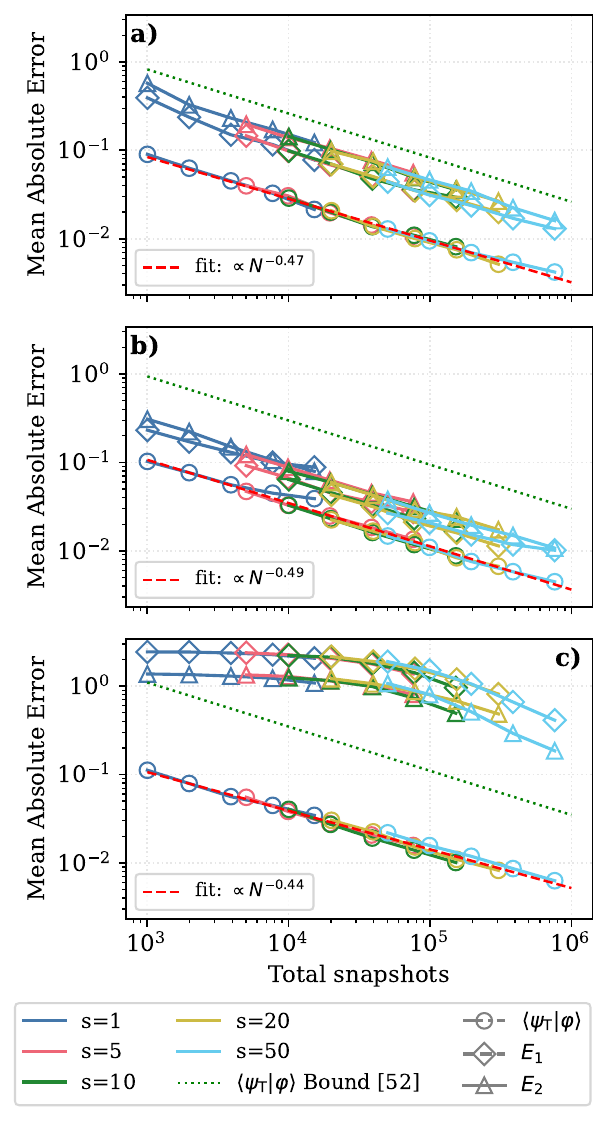}
    \caption{Overlap, one-body energy, and two-body energy estimates for a) \ce{H8} (16 qubits), b) \ce{H12} (24 qubits), and c) \ce{Li2O4} (40 qubits) using simulated matchgate shadows with a varying number of shots per shadow unitary. Also shown are the fit lines for the overlap scaling (red dashed) as well as upper bounds on the error of the overlap calculation based on the number of samples in \cref{eq:n_samples} (green dotted)~\cite{wan2023matchgate}.}
    \label{fig:snapshot_scaling}
    \stepcounter{subfigure}
    \addtocounter{subfigure}{-1}  
    \refstepcounter{subfigure}\label{fig:snapshot_scaling_a}
    \refstepcounter{subfigure}\label{fig:snapshot_scaling_b}
    \refstepcounter{subfigure}\label{fig:snapshot_scaling_c}
\end{figure}

Let us now focus on a more detailed analysis of the performance of our matchgate shadows implementation used for the computations of \cref{fig:evolution_plot_h8_H12,fig:qc_afqmc_performance_40qb}. In \cref{fig:snapshot_scaling}, we report the mean absolute error of the estimated overlap between the trial state and a randomly chosen MC walker as well as its one- and two-body local energy estimates. Results are shown for a) \ce{H8}, b) \ce{H12}, and c) \ce{Li2O4} in the TS state using noiseless matchgate shadow data. The scaling of the overlap error with respect to the number of snapshots $N_{\rm s}$ follows the expected $\mathcal{O}(N_{\rm s}^{-0.5})$ behavior and remains below its theoretical bound~\cite{wan2023matchgate}. While this is expected for overlap estimation, no analogous guarantees exist for the energy estimators. These involve finite-difference derivatives, requiring differences of overlap estimates, which can amplify statistical fluctuations. Furthermore, the evaluation of the two-body energy involves an additional summation over the Cholesky index $\gamma$, introducing another source of variance. For \ce{Li2O4}, we employ a median-of-means estimator with a fixed number of bins, which accounts for the slower convergence observed at low numbers of snapshots.

Collecting multiple shots per matchgate unitary can improve the quantum workload in some architectures~\cite{helsen2023thrifty, zhou2023performance}. However, a study on the utility of multi-shot matchgate shadows for QC-AFQMC has been so far missing. For small systems $N\leq8$, we see diminishing returns of using additional shots per unitary because the number of possible measurement outcomes is small relative to the shot count. For larger systems, the number of potential outcomes far exceeds any practical number of shots. Thus, as seen in \cref{fig:snapshot_scaling}, error scaling is shown to be unaffected by shot multiplicity, implying an equivalent tradeoff between unitaries and shots at larger system sizes.

The matchgate shadows protocol for calculating ratios has been shown to be inherently noise resilient to Markovian, invertible, gate-independent quantum error channels, because the rescaling factors from the errors cancel in the ratios~\cite{chen2021robust,koh2022classical,zhao2024grouptheoretic,huang2024evaluating,zhao2025quantum}. However, mitigation is required for the other types of noise. We therefore employ three complementary error mitigation techniques. 

First, we exploit parity constraints appearing inherently within the matchgate shadows protocol. As mentioned in \cref{subsec:classical_runtime}, matchgate shadow tomography starts with sampling from the orthogonal group $O(2N)$. The determinant of the orthogonal matrix $Q\in O(2N)$ governs whether the parity of the state is preserved ($\det(Q)=+1$) or flipped ($\det(Q)=-1$). Thus, measurements inconsistent with the expected parity can be discarded.

Second, we employed the robust matchgate shadows method~\cite{chen2021robust, koh2022classical, zhao2024grouptheoretic, wu2024errormitigated}, which approximately learns the impact of the noise on the matchgate channel; we discuss this further in \cref{app:robust}. Although it improves the overlap estimates, we found it to have limited impact on the calculation of ratios of overlaps. This conclusion is in line with what was observed in Ref.~\onlinecite{zhao2025quantum}.

\begin{figure}
    \centering
    \includegraphics[width=0.92\linewidth]{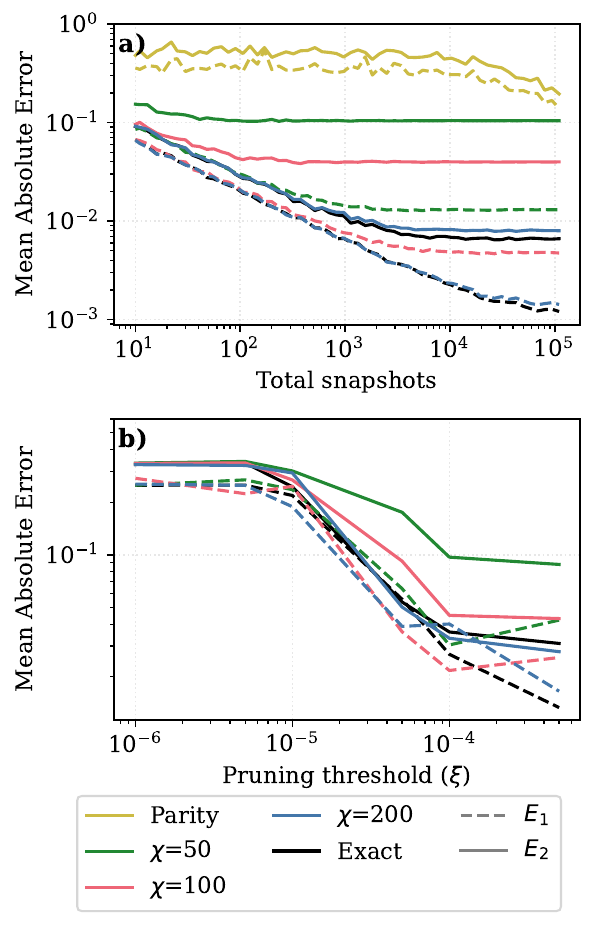}
    \caption{Performance analysis of the filter state method using MPS states for the \ce{H8} hydrogen-chain system. \emph{a)} Overlap, one-body energy and two-body energy errors for a fixed filter threshold $\xi$ and variable bond dimensions $\chi$ of the classical simulation of the circuit are used as a filter. SV corresponds to a full state vector filter representation. \emph{b)} The number of snapshots is fixed to 2,000, while $\chi$ and $\xi$ are varied.}
    \label{fig:mps_post_selection}
    \stepcounter{subfigure}
    \addtocounter{subfigure}{-1}  
    \refstepcounter{subfigure}\label{fig:mps_post_selection_a}
    \refstepcounter{subfigure}\label{fig:mps_post_selection_b}
\end{figure}

Third, we developed a filtering heuristic, where post-selection is based on filter states constructed using a classically efficient representation of the circuits---comprising trial state preparation with the matchgate unitary appended at the end. The overlap of each measured bit string with these filter states is computed, and only measurements exceeding a filter threshold $\xi$ are retained as signal. Although the approximation of the filter state introduces a slight bias, it reliably rejects measurements that are very unlikely to occur under noiseless conditions. Here, we used matrix product states (MPS)~\cite{perezgarcia2007matrix} as the filter state representation, which can be efficiently computed for low enough bond dimensions $\chi$.  We note that alternative representations, such as filter states obtained from  Pauli~\cite{rall2019simulation} or Majorana~\cite{miller2025simulation} propagation, are equally applicable and could potentially be more efficient in some cases. 

The two relevant quantities for the implementation of our MPS-based filtering are the bond dimension $\chi$ of the underlying MPS state and the filtering threshold $\xi$.
In \cref{fig:mps_post_selection_a}, we fix $\xi=10^{-4}$ and vary $\chi$. In all cases, parity violating measurements were removed from the dataset. In the case of only parity-based filtering, we see improvements to the error only occurring after around $\sim10^4$ snapshots. We see an improvement to this when using our filtering technique, but too low bond dimension ($\chi\sim 50-100$) limits the precision. Therefore, a high-enough bond dimension is required for the error mitigation to be successful. In contrast, at $\chi=200$, the one- and two-body local energy estimates approach the accuracy of a statevector filter state, an unbiased but classically intractable reference. In \cref{fig:mps_post_selection_b}, we use a fixed number of 2,000 unitaries, then sweep $\xi$ from $10^{-6}$ to $5\cdot10^{-3}$. We find that higher $\xi$ progressively removes noisy measurements until the filter becomes too aggressive, filtering valid signal measurements and introducing a significant bias. While the ultimate reach of this heuristic error mitigation technique remains to be studied, we found it to produce excellent results for the noisy hardware calculations presented in \cref{fig:evolution_plot_h8_H12}.

\section{Discussion}\label{sec:discussion}

The present work advances QC-AFQMC along the axis of classical post-processing efficiency. 
Our reduction of the classical runtime scaling from $\tilde{\mathcal{O}}(N^{5.5})$ per time step to $\tilde{\mathcal{O}}(N^{4.5})$ brings the 
hybrid algorithm within half a polynomial power of the original $\tilde{\mathcal{O}}(N^{4})$ scaling of ph-AFQMC. Combined with our code optimizations, this substantially increases the accessible system sizes of QC-AFQMC,
as we have shown by executing the post-processing step for up to 20 orbitals (40-qubit systems) with modest computational resources. We estimate that for a 50-orbital system, the classical post-processing would require approximately a week of runtime on ten million CPU cores. For a 100-orbital system, our improvements reduce the estimated classical runtime from close to half a millennium to a couple years using ten million CPU cores. While large, this runtime establishes a credible path toward QC-AFQMC as a potential algorithm for practical quantum advantage once quantum hardware matures.

Our hydrogen-chain results, \ce{H8} using error-mitigated experimental data from IQM Emerald and \ce{H12} with simulated data, show that the walker evolution is stable throughout the imaginary-time propagation when using matchgate shadow tomography and that statistical noise from the estimation process does not adversely impact the AFQMC dynamics. For \ce{H12}, we found the exact estimates and estimates using matchgate shadows to be in agreement, with the latter showing a slightly higher variance which could be reduced by increasing the walker population, as was done in Refs.~\cite{huang2024evaluating,zhao2025quantum}. At 40 qubits, we have successfully run QC-AFQMC at more than double the system size of previous demonstrations. The walker evolution was stable, with local energy estimates using matchgate shadow data agreeing closely with those of classical estimates for all three \ce{Li2O4} configurations, thus allowing us to capture the correct behavior along the reaction pathway. 

Given the success of ph-AFQMC and under the conjecture that quantum hardware can prepare higher-quality trial states than classical methods, these results represent a meaningful step toward executing quantum algorithms at a scale sufficient to tackle industrially relevant problems. However, significant work remains before QC-AFQMC can be incorporated into industrial research workflows. Below we outline several promising directions. 


From the classical compute side, obvious improvements include dedicated low-level optimization and improved parallelization across compute nodes, as well as porting the post-processing code to GPU architectures, which could provide a significant speedup with relatively low implementation overhead~\cite{zhao2025quantum}. Employing a growth estimator to evaluate the ground-state energy could further improve the algorithm's classical scaling by one polynomial power, bringing it down to the cost of a single overlap evaluation. This advantage, however, comes at the cost of increased variance in the energy estimate and the need for smaller time steps, potentially leading to a slower evolution.

One can also use smaller quantum devices to run QC-AFQMC on larger systems~\cite{kiser2025contextual,zhao2025quantum}. In the simplest form, this can be done by preparing quantum trials only within active spaces, while running ph-AFQMC for the full system. Additionally, one could use more advanced methods of compressing active spaces further, such as contextual subspace rotations~\cite{kiser2025contextual}. This way, a trial state of 50 orbitals may already be sufficient for investigating industrially relevant systems. 


From the quantum runtime side, the most straightforward improvement would come from parallelizing circuit execution across multiple devices. We also note that, throughout the analysis, the compilation time of the circuits has been ignored, rendering all estimates somewhat optimistic. 
Minimizing the compilation overhead is an important direction for future work.
One approach would be to exploit the fact that the same trial state is prepared at every time step and leverage the specific static structure of matchgate unitary circuits as discussed in Refs.~\onlinecite{wan2023matchgate,zhao2025quantum}.

Improvements on the shadow tomography methods used in QC-AFQMC would lead to further reductions in both the estimation error and runtime. Orbital-rotated shadows~\cite{zhao2021fermionic} have lower variance compared to matchgate shadows, but come at the cost of higher classical post-processing scaling. Besides studying this trade-off, it is worthwhile investigating whether it is possible to design a shadow protocol which combines the variance of orbital-rotated shadows with the scaling of matchgate shadows. Further, the development of new error mitigation techniques for sampling-based algorithms is critical for pushing the utility of QC-AFQMC in the NISQ as well as early-FT regimes.

A potential resource reduction in both qubits (as a result of lower T-gate requirements for FT implementations) and runtime could be achieved by sampling matchgate circuits in the Clifford group, equivalent to sampling from the signed permutation matrices with positive determinant $\text{Sym}^+(2,2N)$ rather than $O(2N)$~\cite{zhao2021fermionic,wan2023matchgate,heyraud2025unified}. 
Additional improvements could come from utilizing importance sampling within shadow protocols, which offers an underexplored avenue to reduce the runtime. Since walkers evolve only incrementally between time steps, the walker distribution at a given time step carries substantial information about the distribution at the next time step. This knowledge could be exploited to concentrate or inform the sampling of shadow unitaries. 

Here, we have based our quantum FT resource estimates on well established QEC concepts, such as the surface code and magic state distillation routines. State-of-the-art quantum error correction developments such as magic state cultivation~\cite{gidney2024magic} and quantum error correction codes with more efficient encoding rates such as quantum low-density parity-check codes (qLDPC)~\cite{mackay2004sparse} could lead to significant reductions in the physical qubit count and runtime overheads. Another possible reduction is offered by the combination of quantum errror mitigation (QEM) and QEC to lower the resources needed by QEC, an approach that has gained significant attention recently \cite{Zhang_2025_qem,jeon2026quantumerrorcorrectionerrormitigated}. 


Another fundamental aspect of QC-AFQMC requiring further work is centered around determining which quantum methods can prepare more ``powerful" trial states compared to what can be implemented classically. This is particularly important since recent work has shown evidence for cases where higher-fidelity trial states result in lower quality ph-AFQMC~\cite{kjonstad2025systematic,kjønstad2026phaseless}. Developing more efficient methods for loading tensor networks onto quantum computers may be an applicable substitute when the estimated crossover between MSD-AFQMC and QC-AFQMC becomes tangible. 

A closely related challenge is the diminishing overlap problem, whereby the trial state and the SD walkers occupy increasingly orthogonal subspaces as the Hilbert space grows, causing their overlap to decay~\cite{mazzola2022exponential}. Thus, fixed statistical noise in the estimator becomes more dominant. These two problems are coupled since a higher-fidelity trial state naturally has larger overlaps with dominant walkers.  One path forward could be replacing the SD walkers with more complex state representations, such as MPS states~\cite{wouters2014projector,jiang2025unbiasing}.

Finally, while this work has focused on quantum chemistry, it would be of significant interest to extend QC-AFQMC to systems from other domains, such as condensed matter, solid-state and high-energy physics.

\section{Methods}\label{sec:methods}

\subsection{QC-AFQMC}\label{subsec:qcafqmc_description}
The \emph{ab initio} electronic Hamiltonian in second quantization after a Cholesky decomposition is given by \cref{eq:qcham}, and finding the ground state of this system is of broad interest. In the following, we provide a brief overview of the hybrid quantum-classical workflow Quantum-Classical AFQMC (QC-AFQMC). For more details, we refer the reader to Refs.~\onlinecite{zhang2013auxiliary,motta2018ab,huggins2022unbiasing}.

QC-AFQMC performs imaginary-time evolution stochastically with small time steps to reach the ground state of an $N$ spin-orbital chemical system with $\zeta$ fermions via
\begin{equation}
\begin{aligned}
    \left|\Psi_{\text{GS}}\right\rangle& \propto \lim _{\tau \rightarrow \infty} \exp (-\tau \hat{H})\left|\Phi_0\right\rangle\\
    &=\lim _{n\rightarrow \infty}\left(\exp 
 (-\Delta \tau \hat{H})\right)^n|\Phi_0\rangle\,,\label{eq:afqmc}
\end{aligned}
\end{equation}
where \(|\Psi_{\text{GS}}\rangle\) is the true ground state, \(|\Phi_0\rangle\) is an initial state with some overlap with the true ground state \(\langle\Phi_0|\Psi_{\text{GS}}\rangle\neq 0\), and repeated short-imaginary-time propagation implements \(\tau=n\Delta\tau\rightarrow\infty\). 

The Hubbard-Stratonovich transformation~\cite{hubbard1959calculation,*stratonovich1957method,negele2018quantum} expresses the short-time propagator as
\begin{equation}
    e^{-\Delta \tau \hat{H}} = \int \mathrm{d} \mathbf{x}\: p(\mathbf{x}) \hat{B}(\Delta \tau, \mathbf{x})+\mathcal{O}\left(\Delta \tau^2\right)\,,
\end{equation}
where \(p(\mathbf x)\) is Gaussian and \(\hat{B}(\mathbf{x})\) is a one-body propagator coupled to auxiliary fields \(\mathbf x\). By Thouless' theorem~\cite{thouless1960stability,*thouless1961vibrational}, \(\hat{B}(\mathbf{x})\) maps single SDs to single SDs. 

The global wavefunction at time \(\tau\) is
\begin{equation}
|\Psi(\tau)\rangle=\sum_i^{N_\textrm{w}} w_i(\tau) \frac{\left|\varphi_i(\tau)\right\rangle}{\left\langle\Psi_\textrm{T}|\varphi_i(\tau)\right\rangle}
\end{equation}
where for the $i$-th walker of $N_\textrm{w}$ walkers at time $\tau$, $w_i(\tau)$ and $|\varphi_i(\tau)\rangle$ are the walker weight and SD, respectively, and \(|\Psi_\textrm{T}\rangle\) is the trial wavefunction prepared on a quantum device.

Walker propagation and weight updates follow the equations
\begin{align}
\left|\varphi_i(\tau+\Delta \tau)\right\rangle  &=\hat{B}\left(\Delta \tau, \mathbf{x}_i-\overline{\mathbf{x}}_i\right)\left|\varphi_i(\tau)\right\rangle\label{eq:propagate}\\
w_i(\tau+\Delta \tau)  &=I_\text{ph}\left(\mathbf{x}_i, \overline{\mathbf{x}}_i, \tau, \Delta \tau\right) \times w_i(\tau)\,,\label{eq:phaseless}
\end{align}
where \(\overline{\mathbf{x}}_i\) is the force bias \cref{eq:force_bias} for the $i$-th walker and the phaseless importance function is
\begin{equation}\label{eq:importance_function}
    I_{\rm{ph}}\left(\mathbf{x}_i, \overline{\mathbf{x}}_i, \tau, \Delta \tau\right)=\left|s_i(\tau, \Delta \tau) \Gamma_{\mathbf{x}_i}\right|\cdot m\,,
\end{equation}
where 
$\Gamma_{\mathbf{x}_i}=e^{\mathbf{x}_i \cdot \overline{\mathbf{x}}_i-\overline{\mathbf{x}}_i \cdot \overline{\mathbf{x}}_i / 2}$, the overlap ratio
\begin{equation}
    s_i(\tau, \Delta \tau)=\frac{\langle\Psi_{\textrm{T}}|\hat{B}\left(\Delta \tau, \mathbf{x}_i-\overline{\mathbf{x}}_i\right)| \varphi_i(\tau)\rangle}{\left\langle\Psi_{\textrm{T}}|\varphi_i(\tau)\right\rangle}\label{eq:ovlp_ratio}\,,
\end{equation}
and $m=\max \left(0, \cos \left(\arg(s_i(\tau, \Delta \tau))\right)\right)$. \cref{eq:importance_function} constrains random walks with a boundary condition set by the trial wavefunction known as the cosine projection. The weight update method mitigates the phase problem but introduces bias that is eliminated when the trial wavefunction is exactly the true ground state. 

The QC-AFQMC energy at time $\tau$ is
\begin{equation}\label{eq:block_energy}
E_{\rm B}(\tau) = \frac{\sum_iw_i E_{{\rm L},i}}{\sum_i w_i}, \quad E_{{\rm L},i} = 
\frac{\langle \Psi_\text{T}|\hat{H}|\varphi_i(\tau)\rangle}{\langle \Psi_\text{T}|\varphi_i(\tau)\rangle}\,,
\end{equation}
where $E_{{\rm L},i}$ is known as the local energy. 
Without using reblocking techniques for the reduction of autocorrelations between time steps, the estimate of the energy using (QC-/ph-)AFQMC is the average of the block energies $E_{\rm B}(\tau)$ for many discrete time steps $\tau$ beyond an equilibration period. Thus, the energy estimate using (QC-)AFQMC is given by
\begin{equation}
    E=\frac{1}{N_B}\sum_i^{N_B}E_{\rm B}(\tau_i)\,,
\end{equation}
where $\tau_i\in[0,N_B\Delta\tau]$ for $N_B$ many block energy estimates during the time evolution after an equilibration period. 
By the propagation of errors method, the final QC-AFQMC energy error due to the finite measurement under the matchgate shadows protocol is given by~\cite{kiser2024classical}
\begin{equation}
    \varepsilon_{E}=\frac{\varepsilon_{\rm B}}{\sqrt{N_B}}=\frac{\varepsilon_{\rm L}}{\sqrt{N_B}\sqrt{N_\textrm{w}}}\,.
\end{equation}
An important implication of the above relation is that, for a given target variance in the full QC-AFQMC energy calculation $E$, the local energy estimates do not have to achieve the same level of precision. This consequence permits the use of sampling methods to calculate the energy. Further, this relaxation of the precision required of the local energies in turn reduces the number of snapshots required in the matchgate shadows calculations.

We note that the total error of any AFQMC run would also include stochastic errors from the Monte Carlo process. Further, it is worth noting that the errors introduced in the propagation using matchgate shadow data could result in the convergence to an incorrect energy estimate. It has also been shown that covariances exist when using matchgate shadows; thus, the snapshot estimates are not independent~\cite{kiser2024classical}.

\subsection{Calculating overlaps with Slater determinants using matchgate shadows}\label{subsec:calc_ovlp_mg_shadows}
One method for calculating overlaps between a pure quantum state $\ket{\Psi}$ and a classical representation of a SD $\ket{\varphi}$ is through matchgate shadows~\cite{wan2023matchgate}. There are three general steps: state preparation, measurement, and post-processing. In the state preparation step, one prepares 
\begin{equation}\label{eq:tau_state}
    \ket{\Omega}=\frac{\ket{\Psi}+\ket{\boldsymbol{0}}}{\sqrt{2}}\,,
\end{equation}
on a quantum device. In the measurement stage, the state is evolved by a randomly sampled matchgate circuit $U_Q$, which is described by a matrix in the real orthogonal group $Q\in O(2N)$, and performing a computational basis state measurement to obtain the data $\textbf{b}\in\{0,1\}^N$. In the post-processing stage, one calculates an unbiased estimate of the overlap by post-processing the measurement outcome $\textbf{b}$ with the following formula
\begin{equation}\label{eq:unbiased_estimator}
    \hat{o} = \sum_\ell \alpha_{\ell,\zeta,n} q^{(\ell)}(z)/\ell!\,,
\end{equation}
for the polynomial 
\begin{equation}\label{eq:matchgateinversechannel}
    q(z)=\Pf\left[\left.\left(C_{\boldsymbol{0}}+z\tilde C_{\textbf{b}}\right)\right|_{\overline{S}_{\zeta}}\right]\,,
\end{equation}
where $C_{\textbf{b}}$ is the covariance matrix of the measurement outcome $\textbf{b}$, $\tilde C_{\textbf{b}}=W^*\tilde Q Q^\text{T} C_{\textbf{b}}Q\tilde Q^\text{T}W^\dagger$, $\alpha_{\ell,\zeta,n}=2 \binom{2n}{2\ell}\binom{n}{\ell}^{-1}i^{\zeta/2} / 2^{n-\zeta/2}$, $W$ rotates the basis to the set of Majorana operators~\cite{wan2023matchgate}, $\tilde Q$ is the orthogonal matrix representation of the SD $\ket{\varphi}$, $\overline{S}_\zeta:=[2N]\setminus\{1,3,\dots,2\zeta-1\}$, and $[i]=\{1,2,\dots,i\}$.

These unbiased estimates are then averaged, or a median of means is taken, across many measurement runs to obtain an estimate of the overlap. A given accuracy $\varepsilon$ can be reached with $\mathcal{O}(\log(N)\sqrt{N}/\varepsilon^2)$ many samples (\cref{eq:n_samples})~\cite{wan2023matchgate}. 

We note that an alternative, orbital-rotated shadows protocol has been proposed for estimating overlaps with a variance scaling as $\mathcal{O}(\log(N))$~\cite{zhao2021fermionic, kiser2024classical} rather than $\mathcal{O}(\sqrt{N}\log(N))$ in the case of matchgate shadows~\cite{wan2023matchgate}. This, however, comes at the cost of extra qubits, longer circuits and an additional polynomial factor in the post-processing of an overlap---$\mathcal{O}(N^5)$. We have thus chosen to focus solely on the matchgate shadows protocol from Ref.~\onlinecite{wan2023matchgate} in this work.

\subsection{Improving the classical post-processing for calculating the force bias}\label{subsec:fb_improvement}

The calculation of the force bias can be performed at the same cost as an overlap, which was $\mathcal{O}(N^4)$ time~\cite{jiang2025unbiasing}. We now show how it can be expressed as a calculation of a ratio of overlaps and performed in $\mathcal{O}(N^3)$ time using our improved method of classical post-processing. The force bias $\bar{x}_\gamma$ associated with the $\gamma$-th Cholesky operator is defined by the derivative
\begin{equation}
 \bar{x}_\gamma=-\sqrt{\Delta\tau}\left.\frac{\partial\ln(\bra{\psi_{\rm T}}\varphi(\lambda)\rangle)}{\partial\lambda}\right|_{\lambda=0}\,,
\end{equation}
where $|\varphi(\lambda)\rangle=e^{\lambda\hat{v}_\gamma}\ket{\varphi}$ is the Cholesky-evolved SD. Inserting \cref{eq:unbiased_estimator} and \cref{eq:matchgateinversechannel} into the above, 
we get the nested expression
\begin{equation}\label{eq:nested_g_h_force_bias}
     \bar{x}_\gamma= -\sqrt{\Delta\tau}\left.\frac{\partial}{\partial \lambda}\ln\left(g(h(\lambda))\right)\right|_{\lambda=0}\,,
\end{equation}
where we define  $g:\mathbb{C}^{2N\times2N}\rightarrow\mathbb{C}$ as
\begin{equation}
    g(X)=\sum_{\ell=0}^n\frac{\alpha_{\ell,\zeta,n}}{\ell!} \left.\frac{\partial^\ell}{\partial z^\ell}\Pf\!\left[
\left.\left(
C_{\boldsymbol{0}}
+
zX
\right)\right|_{\overline{S}_{\zeta}}
\right]\right|_{z=0}\,,
\end{equation}
Embedded within $h(\lambda)$ is the dependence on the Cholesky matrices. 
In defining $h(\lambda)$, we 
start by defining the function $m:\mathbb{R}\rightarrow\mathbb{C}^{N\times N}$ for the unitary matrix $V$ of the MC walker SD as
\begin{equation}
    m(\lambda) = e^{i\lambda L^{\gamma}}V\,.
\end{equation}
Next, the (special) orthogonal matrix with respect to the SD is given by $\tilde Q:\mathbb{C}^{N\times N}\rightarrow\mathbb{R}^{2N\times 2N}$ as
\begin{equation} \tilde Q(X) = \begin{bmatrix} R_{11} &\dots &R_{1N} \\ 
\vdots &\ddots &\vdots \\ 
R_{N1} &\dots &R_{NN} \end{bmatrix}\,,
\end{equation}
where the blocks are defined by
\begin{equation}
R_{jk} \coloneqq \begin{bmatrix} \mathrm{Re}(X_{jk}) &-\mathrm{Im}(X_{jk}) \\ \mathrm{Im}(X_{jk}) &\mathrm{Re}(X_{jk}) \end{bmatrix}. \end{equation}
Then to calculate the overlap with a particular SD, we have $\tilde C_{\textbf{b}}:\mathbb{R}^{2N\times 2N}\rightarrow \mathbb{C}^{2N\times 2N}$
\begin{equation}
\tilde C_{\textbf{b}}(X)=W^*X Q^\text{T} C_{\textbf{b}}Q X^\text{T}W^\dagger\,.
\end{equation}
With each of these pieces in hand, we can build the expression for $h$ as 
\begin{equation}
\begin{aligned}
h(\lambda)&=\tilde C_{\textbf{b}}(\tilde Q(m(\lambda)))\\
&=W^*\tilde Q(m(\lambda)) Q^\text{T} C_{\textbf{b}}Q \tilde Q(m(\lambda))^\text{T}W^\dagger\,.
\end{aligned}
\end{equation}

Returning to \cref{eq:nested_g_h_force_bias}, we write the force bias terms as calculations of the ratio
\begin{equation}
\begin{aligned}
    \bar{x}_\gamma&=-\sqrt{\Delta\tau} \left.\frac{g'(h(\lambda))}{g(h(\lambda))}\right|_{\lambda=0}\\ &= -\sqrt{\Delta\tau}\frac{\left.g'(h(\lambda))\right|_{\lambda=0}}{g(h(0))}\,,
\end{aligned}
\end{equation}
where the denominator is independent of the Cholesky matrix and is equivalent to the estimation of the overlap with the unevolved MC walker $\langle\psi_{\rm T}|\varphi\rangle$. Thus, it remains only to differentiate the numerator. Applying the chain rule, we obtain
\begin{align}
    \left.\frac{\partial}{\partial \lambda}g(h(\lambda))\right|_{\lambda=0}=&
    \left.\left\langle\overline{\nabla_{h}g},\mathbb{J}_h\right\rangle\right|_{\lambda=0}\,,
\end{align}
where $\langle,\rangle$ indicates the Hilbert-Schmidt inner product.
Since $g$ is a scalar-valued function of a matrix, its derivative with respect to $h$ is actually its gradient. Meanwhile, since $h$ is a function of a scalar to a complex vector space, its derivative with respect to $\lambda$ is its Jacobian  $\mathbb{J}_{h}: \mathbb{R} \to \mathbb{C}^{2N\times 2N}$~\cite{kramer2024tutorialautomaticdifferentiationcomplex}. 
One could straightforwardly apply this for every Cholesky matrix, which would add an $\mathcal{O}(N)$ scaling factor to the calculation of the overlap, but we are targeting a scaling equivalent to the overlap post-processing.  

We can leverage the fact that $\lambda$ multiplies each Cholesky matrix $L^{\gamma}$ to define the auxiliary variable $Z(\lambda)=\lambda L^{\gamma}$ and the auxiliary function $\tilde h(Z(\lambda))=h(\lambda)=\tilde C_{\textbf{b}}(\tilde Q(\tilde m(\lambda L^\gamma)))$ where $\tilde m = e^{iX}V$. We can then find the derivative of $h$ as
\begin{align}
    \left.\frac{\partial h(\lambda)}{\partial \lambda}\right|_{\lambda=0}&=\left.\frac{\partial \tilde h(Z(\lambda))}{\partial\lambda}\right|_{\lambda=0}\\
    &=\mathbb{J}_{\tilde h}(0)\cdot\frac{\partial Z}{\partial\lambda}\\
    &=\mathbb{J}_{\tilde h}(0)\cdot L^{\gamma}\,.
\end{align}
This is the Jacobian-vector product (JVP), which can be seen as a map that takes an input point $\lambda$, at which the Jacobian is evaluated, and a vector which is multiplied by the Jacobian.  In our case, the input would be $\lambda=0$ and the Cholesky matrix $L^{\gamma}$. For simplification of notation, we now label the JVP as the function $S$.
We have now written the $\gamma$-th component of the numerator of the force bias as: 
\begin{equation}
        \left.\frac{\partial}{\partial \lambda}g(h(\lambda))\right|_{\lambda=0} = \left\langle \overline{\nabla_{h}g}, S(L^{\gamma})\right\rangle\,.
\end{equation}
Due to the adjointness properties of linear maps, we can rewrite the expression to have just the Cholesky matrix on the right
\begin{equation}
        \left.\frac{\partial}{\partial \lambda}g(h(\lambda))\right|_{\lambda=0} = \left\langle S^\dagger\left(\overline{\nabla_h g}\right),L^{\gamma}\right\rangle\,,
\end{equation}
where $S^{\dagger}$ is the adjoint of a JVP, also called Vector-Jacobian product (VJP). We can apply the VJP to $\nabla_{h}g$, which is independent of the Cholesky matrices
\begin{equation}
    A\coloneqq S^{\dagger}\left(\overline{\nabla_{h}g}\right) \,.
\end{equation}
So we have that
\begin{equation}
    \left.\frac{\partial}{\partial \lambda}g(h(\lambda))\right|_{\lambda=0} = \sum_{i,j}(A^*_{ij}L^{\gamma}_{ij})\,.
\end{equation}
The final expression is then a trace of a matrix product. The one-time computation of the VJP has the same asymptotic scaling as the underlying function, which is the overlap computation that in our improved method scales as $\mathcal{O}(N^3)$. This final product needs to be evaluated for each Cholesky matrix, and each evaluation scales as $\mathcal{O}(N^2)$, giving a total scaling as $\mathcal{O}(N^3)+\mathcal{O}(N^2N_{\rm C})=\mathcal{O}(N^3)$, yielding the desired scaling improvement. 

\subsection{Review of assumptions on classical runtime scaling estimates}\label{subsec:scaling_assumptions}

Here we discuss the underlying assumptions for the scaling estimates of QC-AFQMC as it relates to existing literature, shown in \cref{fig:complexity_scaling}. The total runtime assumes the time evolution of $1,000$ Monte Carlo walkers over $N^2$ imaginary-time steps. For the QC-AFQMC estimates, the number of snapshots $N_{\rm s}$ is based on the bound from Ref.~\onlinecite{wan2023matchgate}, which scales as $\mathcal{\tilde{O}}(\sqrt{N})$, 
\begin{widetext}
\begin{equation}\label{eq:n_samples}
N_{\rm s}= \lceil9/2\ln(N_\textrm{w}/\delta)\rceil\cdot\lceil24 b(N,\zeta)/\varepsilon^2\rceil\,,
\end{equation}
where $\varepsilon=0.1$, $\delta=0.1$, $N_\textrm{w}=1$ given the results in \cref{fig:snapshot_scaling} that the bound is quite high even for a single walker and~\cite{wan2023matchgate}
\begin{equation} b(N, \zeta) \coloneqq \frac{1}{2^{2N}} \sum_{\substack{\ell_1,\ell_2,\ell_3 \geq 0\\ \ell_1 + \ell_2 + \ell_3 \leq N}}\alpha_{\ell_1,\ell_2,\ell_3}\, \kappa(N, \zeta, \ell_1,\ell_2,\ell_3),  \end{equation}
where 
\begin{equation} \alpha_{\ell_1,\ell_2,\ell_3} \coloneqq \frac{\binom{N}{ \ell_1,\ell_2,\ell_3, N-\ell_1-\ell_2-\ell_3}}{\binom{2N}{ 2\ell_1,2\ell_2,2\ell_3, 2(N-\ell_1-\ell_2-\ell_3)}}\frac{\binom{2N}{ 2(\ell_1 + \ell_3)}}{\binom{N}{ \ell_1 + \ell_3}} \frac{\binom{2N}{ 2(\ell_2 + \ell_3)}}{\binom{N}{ \ell_2 + \ell_3}}\,, \end{equation} 
and
\begin{equation}  \kappa(N,\zeta, \ell_1,\ell_2,\ell_3) \coloneqq 2^{\zeta}\sum_{j=0}^{\zeta/2} \binom{\zeta}{ 2j}\binom{N-\zeta }{ \ell_1 - \zeta/2 +j,\, \ell_2 - \zeta/2 + j, \, \ell_3 - j, \,N - \ell_1 - \ell_2 - \ell_3 -j}\,. \end{equation}
\end{widetext}

We estimate runtime using ten million cores, corresponding to the approximate scale of the world’s largest supercomputers~\cite{top500}. This reflects a best-case scenario that assumes access to maximum available computational resources.

The times for HF-AFQMC are based on running ipie~\cite{malone2023ipie} code for hydrogen chains up to \ce{H54} using a MacBook Pro M1 chip and extrapolating this timing to \ce{H100}. 
The timing for MSD-AFQMC corresponds to the timing for calculating a single determinant of a MSD-AFQMC calculation shown in Figure 9 of Ref.~\onlinecite{jiang2024improved}, multiplied by a conservative bound of $2^{0.37N}$ relevant for highly correlated systems~\cite{kanno2023quantumselectedconfigurationinteractionclassical}.
The original QC-AFQMC~\cite{huggins2022unbiasing} runtime estimate is calculated by timing Clifford shadow code per snapshot up to 30 qubits, extrapolating to the larger system sizes and multiplying by the binomial factor ${\binom{N}{N/2}}^2$. 
The first matchgate implementation of QC-AFQMC scales as $\tilde{\mathcal{O}}(N^{8.5})$ and is based on the reported times from Ref.~\onlinecite{huang2024evaluating} corrected for the fact that they only calculated about ten unique measurements per step~\cite{benchen2024private}. 
The first demonstration of the algorithmic improvements from Ref.~\onlinecite{jiang2025unbiasing} for QC-AFQMC is based on the reported times from Ref.~\onlinecite{zhao2025quantum} and scales as $\tilde{\mathcal{O}}(N^{5.5})$. They used GPUs, and we use the same $50\times$ factor~\cite{zhao2025quantum} to convert it to CPU timing for a comparison with others' and our timings.
Our calculations were based on one walker of a 40 qubit system size using 10,000 unitaries with ten shots each, for a total of 100,000 snapshots, taking 737 seconds for a time step on the LUMI supercomputer (AMD EPYC 7763) using ten CPU cores. Since the energy evaluation and force bias calculations are dependent on the number of Cholesky operators $N_{\rm C}$, we truncate the Cholesky decomposition such that $N_{\rm C}=N$. Our implementation scales as $\tilde{\mathcal{O}}(N^{4.5})$. In each case, we add the time of HF-AFQMC so that the computational time is always longer than HF-AFQMC.

\subsection{Review of assumptions on quantum runtime scaling estimates}\label{subsec:scaling_assumptions_quantum}

For this scaling estimation, we consider a trial state from \cref{subsec:quantum_runtime} and a matchgate circuit sampled from the orthogonal group $O(2N)$. We transpiled this circuit to square-lattice and all-to-all architectures for superconducting- and ion-trap-based quantum computers, respectively. In all cases, the number of snapshots required is based on the same values as the previous section. 

For the NISQ scenario of directly running on physical qubits, we assume a circuit layer to take 50ns and 200$\mu$s for SC and IT, respectively, roughly the timing of a two qubit gate in each technology. On top of the circuits themselves, we assumed that the superconducting devices use 2.5ms of active reset at the start of each circuit with readout and classical post processing taking 1$\mu$s each, and we assumed the ion-trap device requires 200$\mu$s for Doppler cooling, 2ms of sideband cooling, 50$\mu$s for the state preparation and 1ms for readout.
We note that our ion-trap estimate is likely optimistic given the reported runtime of 4 days for calculations on 24 qubits during performance-mode operation, as reported in Ref.~\onlinecite{zhao2025quantum}.

For the fault tolerant runtime estimation, we used the Microsoft quantum resource estimator~\cite{microsoft2024quantum}. Single-qubit and T-gate error rates were set to $10^{-4}$, while two-qubit and measurement error rates were $10^{-3}$. Single- and two-qubit gate times are both 50ns with measurements taking 100ns. We assumed a T-distillation scheme of 15--to--1~\cite{bravyi2005universal} and limited the number of T-factories to half the number of logical qubits required. We also assumed an overall error budget of 10\%, which corresponds to the sum of the logical qubits, the T state error and the rotation gate error for synthesizing arbitrary-rotation gates.

In the case of the modified Hadamard test, we transpiled the circuit for an LUCJ circuit used above and a matchgate circuit constructed from a random unitary matrix that represents a random SD. We used the same assumptions as above for direct and FT operating modes. The number of snapshots $N_{\rm s}$ to get the same $\varepsilon$ and $\delta$ as above is independent of system size and is calculated as $N_{\rm s}=\lceil 2\log(2/\delta)/\varepsilon^2\rceil$ according to Hoeffding's inequality and the Chernoff bound.

\subsection{Details on implementations}\label{subsec:implementation_details}

In this section, we discuss the implementation details for our results in \cref{subsec:qcafqmc_implementation}.
The QC-AFQMC and ph-AFQMC calculations are performed using a version of ipie~\cite{malone2023ipie} that we have modified to incorporate matchgate shadows within the overlap, force bias and local energy calculations with optimizations and algorithmic differentiation performed using JAX~\cite{jax2018github}.
 
We use 800 and 500 walkers for the \ce{H8}- and \ce{H12}-chains, respectively, for 30 blocks of 20 time steps per block with the block energy (\cref{eq:block_energy}) calculated only at the end of each block from the local energies of the walkers. The imaginary time step is set to $\Delta\tau=0.05~{\rm Ha}^{-1}$ for the first 300 time steps and lowered to $0.01~{\rm Ha}^{-1}$ thereafter to decrease the Trotter error. For the matchgate shadows estimation, we used $10^5$ snapshots consisting of $10^4$ unitaries with ten samples (shots) per unitary (See \cref{subsec:multi_shadow} for details on taking multiple shots).

For each of the three \ce{Li2O4} configurations, we evolved 100 walkers for 100 time steps using matchgate shadow data consisting of $4\cdot 10^4$ unitaries with ten shots per unitary. At the end of the 100 time steps, the block energy (\cref{eq:block_energy}) estimates (QC-AFQMC $E_{\rm B}$) were obtained by bootstrap resampling of the local energies of each walker: $2\cdot 10^4$ unitaries were randomly drawn from the original pool of $4\cdot 10^4$ unitaries, and each block energy was recomputed 200 times to determine the statistical error bars. We compared this to running the ph-AFQMC 200 times under the same parameter settings to obtain the block energies (ph-AFQMC $E_{\rm B}$) without the statistical noise from matchgate shadows.

\subsection{Multi-shot Shadow Tomography}\label{subsec:multi_shadow}

In the original shadow protocol presented in Ref.~\onlinecite{huang2020predicting}, only the case of one measurement per circuit is considered. In practice, one might want to collect multiple measurements per sampled shadow unitary. This can be helpful when the incremental time required to rerun a particular circuit on a device is significantly less than preparing and running a completely different circuit. Practically, this can occur when compiling circuits becomes too costly as a proportion of total runtime and limited quantum resources. The performance of the multi-shot shadow tomography has been studied in~\cite{helsen2023thrifty,zhou2023performance}. Multi-shot shadow tomography was proven not to be beneficial in the case of Clifford shadows, while it was shown to be beneficial in the case of Pauli shadows, depending on the observable considered or doped Clifford shadows.

\section*{Acknowledgements}

We thank Emiliano Godinez for his review and feedback on the algorithmic differentiation section of the manuscript. We thank Shiwei Zhang, Lode Pollet and Benchen Huang for helpful discussions early in the project, and Alessio Calzona for his insights into error mitigation techniques. The authors acknowledge CSC, Finland for awarding us access to LUMI, owned by the EuroHPC Joint Undertaking.


\section*{Competing interests}
The authors declare no competing interests.

\appendix

\section{Differentiation method}\label{app:appDalg}
We now summarize the differentiation method for calculating the coefficients of the polynomial generated by a Pfaffian, and for more technical details refer the reader to Ref.~\onlinecite{wan2023matchgate}. 
In the differentiation method, one recursively computes the coefficients of the polynomial defined by $\Pf(A(z))$, where $A(z)$ has dimensions $n\times n$, through the derivatives 
\begin{equation}
    c_\ell = \frac{\partial^\ell\Pf(A(z))|_{z=0}}{\ell!}\,.
\end{equation}
This is apparent because the coefficient $c_\ell$, corresponding to the $\ell$th power of $z$, is precisely the constant term of the polynomial obtained by taking the $\ell$th derivative of the Pfaffian of the matrix $A(z)$ evaluated at $z=0$ and dividing by $\ell!$. The factorial term comes from the process of successively taking the derivative of the function.

Starting with the first power of the derivative of the Pfaffian, one knows that if $A(z)$ is invertible then,
\begin{equation}
    \partial^1 \Pf(A(z)) = \frac12\Pf(A(z))\text{Tr}\left[A^{-1}(z)\partial^1A(z)\right]\,.
\end{equation}
One can split this expression to define the two functions
\begin{align}
    f(z) = \Pf(A(z))&& g(z) = \frac12\text{Tr}[A^{-1}(z)\partial^1A(z)]\,,
\end{align}
and thus $\partial^1f(z)=f(z)g(z)$. Invoking the product rule, one recursively finds for $\ell\geq1$
\begin{equation}
    \partial^\ell f(z) = \sum_{j=0}^{\ell-1}\binom{\ell-1}{j}\left(\partial^{\ell-1-j}f(z)\right)\left(\partial^jg(z)\right)\,.
\end{equation}
For $A(z)=B+zC$, the first derivative is given by $\partial^1A(z)=C$ and  $g(0)=\frac12\text{Tr}\left[B^{-1}C\right]$. Thus, 
\begin{equation}
    \partial^jg(z)|_{z=0}=\frac12(-1)^jj!\text{Tr}\left[\left(B^{-1}C\right)^{j+1}\right]\,.
\end{equation}
The coefficients are recursively calculated as 
\begin{equation}
    \partial^\ell f = \frac12\sum_{j=0}^{\ell-1}(-1)^jj!\binom{\ell-1}{j}\left(\partial^{\ell-1-j}f\right)\text{Tr}\left[\left(B^{-1}C\right)^{j+1}\right]
\end{equation}

This method scales as $\mathcal{O}(n^3)$ from the computation of the eigenvalues of $B^{-1}C$ to calculate the $\text{Tr}\left[\left(B^{-1}C\right)^{i}\right]$ terms and $c_0=\Pf(B)$, both of which scale as $\mathcal{O}(n^3)$.

\section{Symmetry of the Pfaffian Polynomial Coefficients}\label{app:symmetry}
It is possible to exploit a symmetry in the coefficients of the polynomial involved in the overlap estimator to further reduce the runtime of the algorithm. 
Let $B \in SO(2m)$ for $m\in 2\mathbb{Z}^+$, $B = -B^{\rm T}$ and $Q \in U(2m)$ such that $C = QBQ^{\rm T}$. Therefore, $C=-C^{\rm T}$. Define the polynomial
\begin{equation}p(z) = \text{Pf}(B + zC) = \sum_{k=0}^m a_k z^k\,.
\end{equation}

We can then calculate the value of the inverse of $z$ to be
\begin{align}
p\left(\frac{1}{z}\right) &= \text{Pf}\left(B + \frac{1}{z}C\right) \\
&= z^{-m} \text{Pf}(zB + C)\,.
\end{align}

Since $C = QBQ^{\rm T}$, we get that
\begin{align}
\text{Pf}(B + zC) &= \text{Pf}(B + zQBQ^{\rm T}) \\
&= \text{Pf}(Q(-Q^{\rm T}B^{\rm T}Q + zB)Q^{\rm T}) \\
&= \text{Pf}(Q(C + zB)Q^{\rm T}) \\
&= \det(Q) \text{Pf}(C + zB)\,.
\end{align}
Thus, we can get that
\begin{align}
z^{m} p\left(\frac{1}{z}\right) &= \text{Pf}(zB + C) \\
&= \alpha p(z)\,.
\end{align}
Hence,
\begin{align}
\sum_{k=0}^m a_k z^k &= \alpha^{-1} \sum_{k=0}^m a_k z^{m-k}\,,
\end{align}
which shows that the coefficients obey a symmetry when the $C$ matrix is a rotation of the $B$ matrix, which is exactly the case we have when calculating overlaps.

\section{Numerical stability in overlap calculation}\label{app:num_stability}

\begin{figure}
    \centering
    \includegraphics[width=0.92\linewidth]{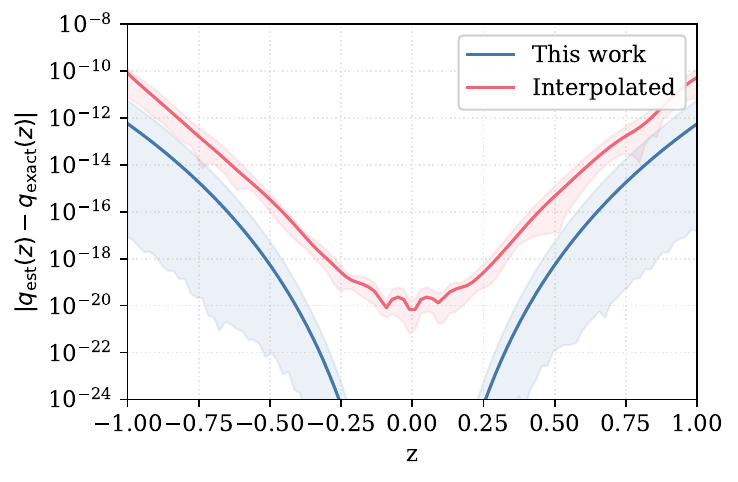}
    \caption{Absolute value of the error between the exact polynomial $q(z)$ and the polynomial reconstructed with coefficients estimated using our new method and interpolation.}
    \label{fig:polynomial_reconstruction}
\end{figure}

In \cref{fig:polynomial_reconstruction}, we show that our proposed method based on Aitken's block transformation of a Pfaffian expression is numerically stable by calculating its error in reconstructing the polynomial $q(z)$ for 40 qubits with a realistic chemical system (\ce{Li2O4}) with respect to the exact values over the interval $z \in [-1,1]$ and comparing to the interpolation method. We see that for various measurement outcomes, we can get better accuracy than the interpolation method. Further analysis on the behavior as it scales to significantly larger systems will be needed.

As described in the main text, our improved overlap computation involves the inversion of a sub-block of a matrix. In principle, nothing guarantees that the sub-block is non-singular, which implies the Pfaffian is 0. In practice, we see that the sub-blocks are indeed typically non-singular and the computation is well-behaved. However, in rare instances, a sub-block is singular or near singular, resulting in Pfaffians close to 0. It was observed that even a few of these occurrences have a very strong impact on the final results. To systematically treat these events, a threshold  $\eta$ is introduced such that if the computed Pfaffian is smaller than the threshold the snapshot is discarded during the computation. We observe that only a small fraction of the available data is discarded at this stage. We find that this fraction increases only slightly with the system size. 
There is tuning required of setting $\eta$ for a given system size, highlighting the heuristic nature of the protocol. Further, the discard rate and accuracy are relatively stable as the system size increases for a given threshold. 

Median of means was mentioned in Ref.~\onlinecite{wan2023matchgate} to calculate the estimate from the snapshot values given in \cref{eq:unbiased_estimator}. In practice, for a distribution that is Gaussian a regular mean is sufficient. We saw that even after the Pfaffian filtering described above, there is still a non-zero likelihood that a significant outlier will appear and worsen the estimate. Thus, we employ a median-of-means strategy, where we carefully select the number of bins. One could also combine this with trimming of outliers for improved performance. Further work should be done to determine a systematic way of setting the parameters for this as we had to manually tune the parameters based on the results of preliminary runs.

As noted in Ref.~\onlinecite{kiser2024classical}, there is some covariance within the snapshot dataset. To mitigate this, we typically use a larger number of unitaries and resample from them when estimating the observables throughout a given timestep. For example for \ce{H8} we have a dataset of 50,000 unitaries, from which 10,000 are uniform-randomly chosen for the estimation. We found this resulted in more consistent and reliable results.

\section{Skew-Tikhonov method}\label{app:skew_method}

Here we maintain the notation presented in \cref{subsec:classical_runtime}. An alternative method to the one based on the use of Aitken's block transformation presented in \cref{subsec:classical_runtime} is to perform a skew-Tikhonov perturbation of the row and column truncated version of $C_{0}$, which results in $B$ in \cref{eq:def_B}, to make it non-singular. However, due to numerical instabilities of this method, we suggest Aitken's block transformation method. 
While the form of $B$ in our case is highly structured and gives a simple formula for the perturbation matrix, this method would work for any non-invertible matrix where the perturbation is added to the dependent columns/rows of the non-invertible matrix. 

Now, we will prove that for a $\beta$ perturbation of $B$, denoted $B'$, such that $B'$ is invertible, the Pfaffian of $A(z)=B+zC$ can be approximated up to accuracy $\mcO(\beta^2)$ by evaluating two Pfaffians of $D(z)=B'+zC$ for $\pm\beta$ to interpolate the value of the unperturbed Pfaffian. This implies the coefficients of the polynomial formed by $D(z)$ will also approach those of $A(z)$, since the coefficients can be interpolated by calculating the Pfaffian for various values of $\beta$. Thus, we can use the differentiation method using $B'$ and $C$ to calculate the coefficients of $\Pf(D(z))$ in $\mcO(n^3)$ that will approximate the coefficients of $\Pf(A(z))$ with accuracy $\mcO(\beta^2)$.

\begin{prop}
    Define the matrix $A(z)=B+zC$ where 
    \begin{equation}
    B:=\bigoplus_{j=1}^{a}
    \begin{bmatrix}
        0&0\\
        0&0
    \end{bmatrix}
    \oplus
    \bigoplus_{j=1}^{b}
    \begin{bmatrix}
        0&1\\
        -1&0
    \end{bmatrix}\,,
    \end{equation}
    for $a+b=n$, $C\in\mbbC^{2n\times 2n}$, 
    \begin{equation}
    E_{\beta}:=\bigoplus_{j=1}^{a}
    \begin{bmatrix}
        0&\beta\\
        -\beta&0
    \end{bmatrix}
    \oplus
    \bigoplus_{j=1}^{b}
    \begin{bmatrix}
        0&0\\
        0&0
    \end{bmatrix}\,,
    \end{equation}
    for $\beta\in\mathbb{R}$, 
    $B_{\pm\beta}'=B+E_{\pm\beta}$ and 
    \begin{equation}\label{eq:perturbed_matrix_fn}
    D_{\pm\beta}(z)=B_{\pm\beta}'+zC\,.
    \end{equation} 
    Then,
    \begin{equation}
        \Pf(D_\pm(z))-\Pf(A(z))\sim\mcO(\beta)\,,
    \end{equation}
    and the coefficients of $\Pf(A(z))$ can be estimated with accuracy at least $\mcO(\beta^2)$ by interpolating between $\Pf(D_{-\beta}(z))$ and $\Pf(D_{+\beta}(z))$.
\end{prop}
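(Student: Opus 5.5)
The plan is to treat $\Pf(D_{\beta}(z))$ as a polynomial in $\beta$ for fixed $z$ and read both claims off its low-order structure. Since $E_\beta = \beta E_1$ is linear in $\beta$, the map $\beta\mapsto \Pf(B+zC+\beta E_1)$ is a polynomial in $\beta$ of degree at most $a$. Its degree in $z$ is at most $n$, and its coefficients are polynomial in the entries of $C$. Expanding, we write
\begin{equation}
\Pf(D_\beta(z)) = \sum_{k=0}^{a} \beta^k P_k(z), \qquad P_0(z)=\Pf(A(z)).
\end{equation}
The first claim, $\Pf(D_{\pm\beta}(z))-\Pf(A(z))=\mcO(\beta)$, is then immediate for bounded $z$. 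Indeed, the difference equals $\beta P_1(z)+\mcO(\beta^2)$, and each $P_k$ is a fixed polynomial in $z$.

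To make the $P_k$ explicit, and to see which terms are odd in $\beta$, I would use the minor-summation expansion of a Pfaffian of a sum: $\Pf(X+Y)=\sum_I \pm \Pf(X_I)\Pf(Y_{\bar I})$ over even subsets $I$. I would apply it with $X=\beta E_1$ and $Y=A(z)$. Because $E_1$ is block diagonal with $2\times2$ blocks supported only on the first $2a$ indices, $\Pf((\beta E_1)_I)$ vanishes unless $I$ is a union of whole blocks. In that case it equals $\beta^{|I|/2}$. This gives
\begin{equation}
P_k(z)=\sum_{|S|=k} \pm \Pf\bigl(A(z)|_{\overline{S}}\bigr),
\end{equation}
where $S$ ranges over $k$-subsets of the $a$ degenerate blocks and $\overline{S}$ denotes deletion of those blocks' rows and columns. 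Each $P_k$ is therefore a sum of Pfaffians of principal submatrices of $A(z)$. These are bounded polynomials in $z$ of degree at most $n-k$, which makes the $\mcO(\beta)$ estimate uniform in the coefficients.

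For the second claim I would form the symmetric combination, the simplest interpolation between $\pm\beta$:
\begin{equation}
\tfrac12\bigl[\Pf(D_{\beta}(z))+\Pf(D_{-\beta}(z))\bigr]=\sum_{k \text{ even}}\beta^k P_k(z) = \Pf(A(z)) + \beta^2 P_2(z)+\mcO(\beta^4).
\end{equation}
The odd terms cancel, so the error is $\mcO(\beta^2)$. Equating coefficients of $z^\ell$ on both sides then shows that the $\ell$-th coefficient of $\Pf(A(z))$ is the average of the $\ell$-th coefficients of $\Pf(D_{\pm\beta}(z))$, up to $\beta^2[z^\ell]P_2+\dots$, which is $\mcO(\beta^2)$.

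Each $\Pf(D_{\pm\beta}(z))$ has coefficients computable in $\mcO(n^3)$ by the differentiation method of \cref{app:appDalg}. That method needs $B'_{\pm\beta}$ to be invertible, and this holds because $B'_{\pm\beta}$ is a direct sum of nonsingular $2\times2$ blocks for $\beta\neq0$, with $\Pf(B'_{\pm\beta})=(\pm\beta)^a$. More generally, one could interpolate over more $\beta$ values (Richardson extrapolation) to cancel higher even orders.

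The main obstacle is the bookkeeping of signs in the minor-summation formula and checking that $P_k$ really has the stated block-deletion form. I would sidestep this by only needing that $\beta\mapsto\Pf(D_\beta(z))$ is a polynomial, which follows from multilinearity of the Pfaffian expansion in its entries. The parity cancellation also needs no sign information. A secondary caveat is conditioning. The interpolation is exact in exact arithmetic, but the differentiation method involves $B'^{-1}$, whose entries scale like $\beta^{-1}$. Small $\beta$ therefore amplifies rounding error, which is consistent with the numerical instability the paper reports. I would remark on this rather than bound it.
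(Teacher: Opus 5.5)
Your proposal is correct and follows essentially the same route as the paper. Both arguments expand $\Pf(D_{\pm\beta}(z))$ in powers of $\beta$ using multilinearity of the Pfaffian, then cancel the linear term by averaging over $\pm\beta$; the paper does the expansion by peeling off the $\beta$-blocks one at a time by induction, while you package it as polynomiality in $\beta$ plus the minor-summation formula. Your version is slightly sharper: it shows that all odd orders cancel, and it makes explicit that $B'_{\pm\beta}$ is invertible with $\Pf(B'_{\pm\beta})=(\pm\beta)^a$, which the paper leaves implicit.
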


\begin{proof}
    Define a modified version of $E_{\beta}$ for $j\in\{1,\dots,a\}$ \begin{equation}
        E_{\beta,j}:=\bigoplus_{i=1}^{j}
    \begin{bmatrix}
        0&0\\
        0&0
    \end{bmatrix}
    \oplus
    \bigoplus_{k=1}^{a-j}
    \begin{bmatrix}
        0&\beta\\
        -\beta&0
    \end{bmatrix}
    \oplus
    \bigoplus_{\ell=1}^{b}
    \begin{bmatrix}
        0&0\\
        0&0
    \end{bmatrix}\,,
    \end{equation}
    where $E_{\beta,0}=E_\beta$ as defined above.

    Now, define $A_s(z)=B+E_{\beta,s}+zC$, where $A_0(z)=D(z)$. Due to the multilinearity of the Pfaffian,
    \begin{align}
        \Pf(D(z))&=\Pf(B+E_{\beta,1}+zC)+\beta\Pf(D^{(1)}(z))\\
        &=\Pf(A_{1}(z))+\beta\Pf(A_0^{(1)}(z))\label{eq:base_case}
    \end{align}
    where $X^{(k)}$ is the $k$th row and $k$th column of $X$ replaced with the $k$th row and $k$th column of $E_1$.

    This is done iteratively $a$ times to get 
    \begin{equation}\label{eq:pfD_step1}
        \Pf(D)=\Pf(A)+\beta\left[\sum_{j=0}^{a-1}\Pf\left(A_j^{(2j+1)}\right)\right]\,,
    \end{equation}
    where we remove the explicit dependence on $z$ for compactness of notation.

    We prove this by induction. The base case is satisfied due to the multilinearity of the Pfaffian, shown in \cref{eq:base_case}. We now assume that this is true for some $k\leq a-1$, 
    \begin{equation}
        \Pf(D)=\Pf(A_k)+\beta\left[\sum_{j=0}^{k-1}\Pf\left(A_j^{(2j+1)}\right)\right]
    \end{equation}
    Due to multilinearity 
    \begin{equation}
        \Pf\left(A_k\right) = \Pf\left(A_{k+1}\right) + \beta\Pf\left(A_{k}^{(2k+1)}\right)\,,
    \end{equation}
    thus, we get as desired for the $k+1$ case,
    \begin{align}
        \Pf(D)=\Pf(A_{k+1})+\beta\left[\sum_{j=0}^{k}\Pf\left(A_j^{(2j+1)}\right)\right]\,.
    \end{align}
    Thus, we get \cref{eq:pfD_step1} as desired.

    We can recognize that for each term in the summand $\Pf(A_j^{(2j+1)})$, there will still be $a-j-1$ many $\beta$ terms in the off-diagonal. We can remove these iteratively to get a summation with a term in the form of $\Pf(A^{(2j+1)})$, using the same method as above. Thus, we find up to second order in $\beta$,
    \begin{equation}
        \Pf(D) = \Pf(A) + \beta\left[\sum_{j=1}^{a}\Pf\left(A^{(2j-1)}\right)\right]+\mcO(\beta^2)\,.
    \end{equation}
    Then, by calculating the Pfaffian twice with $\pm\beta$, we can interpolate the value for $\beta=0$
    \begin{equation}
        \frac{\Pf(D_{-\beta}) + \Pf(D_{+\beta})}{2} = \Pf(A) + \mcO(\beta^2)
    \end{equation}
    
\end{proof}

Thus, we use this method within the differentiation method and calculate the coefficients in time $\mcO(n^3)$.

\section{Absolute energies}\label{app:abs_energies}

Here we provide the absolute energies underlying the results discussed in \cref{subsec:qcafqmc_implementation}. Table~\ref{tab:hchain_energies} reports the absolute energies obtained for the \ce{H8} and \ce{H12} hydrogen chain systems using MG-QC-AFQMC, Exact-QC-AFQMC, and ph-AFQMC, alongside FCI reference values. Table~\ref{tab:energies} presents the total and relative energies for the reactant, transition state (TS), and product configurations of the reaction pathway of \ce{Li2O4}, computed with HF, CCSD and the trial state, as well as the block energy $E_{\rm B}$ at the 100th time step of ph-AFQMC and QC-AFQMC, where the reaction energies $\Delta E_1$ and $\Delta E_2$ are given relative to the reactant and shown in \cref{fig:qc_afqmc_performance_40qb}.

\begin{table}
\centering
\setlength{\tabcolsep}{14pt}
\begin{tabular}{lcc}
\hline\hline
\textbf{Method} & \textbf{\ce{H8} (Ha)} & \textbf{\ce{H12} (Ha)} \\
\hline
MG-QC-AFQMC   & -4.204(05) & -6.307(1)  \\
Exact-QC-AFQMC & -4.204(02) & -6.307(07) \\
ph-AFQMC      & -4.203(02) & -6.307(07) \\
FCI & -4.208 & -6.311 \\
\hline\hline
\end{tabular}
\caption{The absolute energies obtained for the \ce{H8} and \ce{H12} hydrogen chain systems using MG-QC-AFQMC, Exact-QC-AFQMC, and ph-AFQMC, alongside FCI reference values.}
\label{tab:hchain_energies}
\end{table}

\begin{table}
    \resizebox{\columnwidth}{!}{%
\begin{tabular}{lcccccc}
\hline\hline
\textbf{Method} & \textbf{Reactant} \textbf{(Ha)}  & \textbf{TS} \textbf{(Ha)} & \textbf{Product} \textbf{(Ha)} \\
 &  & [$\mathbf{\Delta E_1}$ \textbf{(Ha)}] & [$\mathbf{\Delta E_2}$ \textbf{(Ha)}] \\
\hline
HF               & $-314.097$     & $-314.131$ & $-314.156$  \\
                 &   --             & $-0.034$ & $-0.059$  \\[0.6ex]
CCSD             & $-314.220$     & $-314.203$ & $-314.226$  \\
                 &     --           & $0.017$  & $-0.006$  \\[0.6ex]
Trial state      & $-314.214$     & $-314.194$ & $-314.221$  \\
                 &     --           & $0.019$  & $-0.007$  \\[0.6ex]
ph-AFQMC $E_{\rm B}$  & $-314.219(05)$ & $-314.202(1)$  & $-314.225(02)$ \\
                      &     --           & [$0.016(1)$]   & $-0.006(02)$ \\[0.6ex]
QC-AFQMC $E_{\rm B}$  & $-314.218(03)$ & $-314.203(02)$ & $-314.225(01)$ \\
                      &     --           & $0.015(02)$  & $-0.008(02)$ \\
\hline\hline
\end{tabular}%
}
\caption{Absolute energies (Ha) and reaction energies (Ha) for the methods and configurations along the \ce{Li2O4} rearrangement pathway of \cref{fig:qc_afqmc_performance_40qb}.}
\label{tab:energies}
\end{table}

\section{Matchgate sampling in the lithium-air battery case study}\label{app:large_sampling}

As mentioned in the main text, the trial state is created by taking a linear combination of the top SDs in the configuration-interaction expansion of the CCSD wavefunction. Since we are using 40 qubits, matchgate collection cannot be simulated using statevector simulation. In the matchgate shadows protocol, we have to prepare the state $\ket{\Omega}$, see \cref{eq:tau_state}. Therefore, we have to sample from the equal superposition between the all zeros state and the MSD representation of the trial state. Since SDs are fermionic Gaussian states, we can view this state as a superposition of fermionic Gaussian states. The matchgate shadow data are then collected by repeatedly applying random fermionic Gaussian unitaries and recording the measurement outcomes by sampling the state represented by a linear combination of Gaussian states~\cite{dias2024classical}. While efficient, the sampling process is time-consuming for the full expansion; thus, we truncate the state to reduce the complexity while still retaining the character of the problem. A systematic sweep over the number of retained determinants confirms that three determinants are sufficient to recover the CCSD reaction barrier. The resulting MSD trial wavefunction is stored in the occupation-number representation.

\section{Robust shadow estimation}\label{app:robust}
Since current hardware is typically very noisy, one can implement the so-called robust shadow estimation~\cite{chen2021robust, koh2022classical, zhao2024grouptheoretic, wu2024errormitigated} where the impact of the noise in the channel is approximated by calibrating the protocol with respect to a known state, such as the vacuum state. The error-mitigated expression for the unbiased estimates $\tilde {o}^{(j)}$ uses the noisy channel coefficients $\tilde{f}_{2\ell}$ in
\begin{equation}
    \label{eq:robust_estimation}
    \tilde {o}^{(j)}=2\sum_{\ell=0}^n\tilde{f}_{2\ell}^{-1}c_\ell\,,
\end{equation}
where 
the noisy channel coefficient $\tilde{f}_{2\ell}$ is the coefficient of the $z^\ell$ polynomial 
\begin{equation}
    p_{Q,b}(z)=\binom{n}{\ell}^{-1}\Pf(C_{\mathbf{0}})\Pf\left(-C_{\mathbf{0}}^{-1}+zQ^TC_{b}Q\right)\,.
\end{equation}
The calibration step increases the required number of samples by the same order as the estimation step; thus, the method is still efficient.

Depending on the form of the circuit, the robust shadow protocol can be further refined to include the errors in state preparation~\cite{huang2024evaluating}. This is done by modifying the calibration step to include as many gates in the state preparation as possible such that the vacuum state is unaffected. In our case, this is implemented by removing a single Hadamard gate and a few Pauli-X gates. Thus, the noise from performing the matchgate circuit as well as the noise from the state preparation circuit can be accounted for in the calibration step. We utilize this technique in the experimental results in \cref{subsec:qcafqmc_implementation}.

\bibliography{refs}
\end{document}